\newcommand{\NN}{\mathbb{N}}
\newcommand{\ZZ}{\mathbb{Z}}
\newcommand{\CC}{\mathbb{C}}
\newcommand{\m}[1]{\mathbb{#1}}      
\newcommand{\alg}[1]{\m{#1}}         
\newcommand{\cl}[1]{\mathcal{#1}}    
\renewcommand{\AA}{\alg{A}}
\theoremstyle{acmplain}
  \newtheorem{thm}{Theorem}[section]  \newtheorem*{thm*}{Theorem}
        \newtheorem*{claim*}{Claim}
    \newtheorem*{conj*}{Conjecture}
  \newtheorem{cor}[thm]{Corollary}    \newtheorem*{cor*}{Corollary}
  \newtheorem{lem}[thm]{Lemma}        \newtheorem*{lem*}{Lemma}
  \newtheorem{prop}[thm]{Proposition} \newtheorem*{prop*}{Proposition}
\theoremstyle{acmdefinition}
  \newtheorem{defn}[thm]{Definition}  \newtheorem*{defn*}{Definition}
  \newtheorem{obs}[thm]{Observation}  \newtheorem*{obs*}{Observation}
  \newtheorem*{question*}{Question}
\newenvironment{claimproof}
  {
    \begin{proof}[Proof of claim]
    
  } {
    \end{proof}
  }
\DeclareMathOperator{\Cg}{Cg}
\DeclareMathOperator{\Clo}{Clo}
\DeclareMathOperator{\Con}{Con}
\DeclareMathOperator{\Sg}{Sg}
\newcommand{\stack}[2]{\genfrac{}{}{0pt}{}{#1}{#2}}
\numberwithin{equation}{section}  
\renewcommand{\epsilon}{\varepsilon}
\renewcommand{\phi}{\varphi}
\setlist[enumerate]{itemsep=0.5em}
\setlist[itemize]{itemsep=0.5em}
\theoremstyle{acmplain} 
\newenvironment{customthm}[1]
  {\innercustomthm}
  {\endinnercustomthm}
\theoremstyle{acmplain} 
\newenvironment{customcor}[1]
  {\innercustomcor}
  {\endinnercustomthm}
\DeclareMathOperator{\HKP}{HKP}
\DeclareMathOperator{\HSP}{HSP}
\DeclareMathOperator{\Simon}{\mathfrak{s}}
\DeclareMathOperator{\proj}{proj}
\newcommand{\Cong}[1]{\mathbf{#1}}
\newcommand{\comp}[1]{\textsf{#1}}
\renewcommand{\O}{\mathcal{O}}
\newcommand{\Clone}[1]{\mathcal{#1}}
\newcommand{\CloneBot}{\boldsymbol{\bot}}
\newcommand{\CloneTop}{\boldsymbol{\top}}
\newcommand{\CloneVee}{\boldsymbol{\bigvee}}
\newcommand{\CloneWedge}{\boldsymbol{\bigwedge}}
\newcommand{\Aiff}{\leftrightarrow}
\DeclareMathOperator{\maj}{maj}
\newcommand{\bra}[1]{ \left< #1 \right| }
\newcommand{\ket}[1]{ \left| #1 \right> }
\newcommand{\B}{\mathfrak{B}}
\DeclareMathOperator{\Cspan}{\CC-span}
\DeclareMathOperator{\Tr}{Tr}
\DeclareMathOperator{\Prob}{P}
\tikzset{gate-c/.style = {draw, circle, inner sep=0.25em}}
\tikzset{gate-r/.style = {draw, rounded corners, rectangle}}
\tikzset{control/.style = {draw, fill, circle, inner sep=0.1em}}
\newcommand{\tikzMeasure}{
  \begin{tikzpicture}[scale=0.2, baseline=0pt]
    \draw (0,0) arc (180-20:20:1);
    \draw[->] (current bounding box.north|-current bounding box.south) -- ++(60:1.5);
  \end{tikzpicture} }
\tikzset{clone/.style={draw, circle, inner sep=0.125em}}
\tikzset{cloneLabel/.style={preaction={fill,white}, inner sep=0.05em, outer sep=0.33em}}
\tikzset{curveR/.style={out=90-30,in=-90+30}}
\tikzset{curveL/.style={out=90+30,in=-90-30}}
\tikzset{class_and_quant/.style={}}
\tikzset{superpoly/.style={very thick}}
\tikzset{no_poly_node/.style={black!70, thick}}
\tikzset{no_poly_line/.style={densely dotted, thick}}
\tikzset{misc/.style={black!30}}
\keywords{quantum computation, hidden subgroup problem, hidden kernel problem,
  universal algebra, clone theory}
\begin{document}
\author{Matthew Moore}
\affiliation{\institution{The University of Kansas}}
\email{matthew.moore@ku.edu}

\author{Taylor Walenczyk}
\affiliation{\institution{The University of Kansas}}
\email{tawalenczyk@ku.edu}

\title{The Hidden Subgroup Problem for Universal Algebras}
\date{\today}

\begin{abstract}
The Hidden Subgroup Problem (HSP) is a computational problem which includes as
special cases integer factorization, the discrete logarithm problem, graph
isomorphism, and the shortest vector problem. The celebrated polynomial-time
quantum algorithms for factorization and the discrete logarithm are restricted
versions of a generic polynomial-time quantum solution to the HSP for
\emph{abelian} groups, but despite focused research no polynomial-time solution
for general groups has yet been found. We propose a generalization of the HSP to
include \emph{arbitrary} algebraic structures and analyze this new problem on
powers of 2-element algebras. We prove a complete classification of every such
power as quantum tractable (i.e.\ polynomial-time), classically tractable,
quantum intractable, or classically intractable. In particular, we identify a
class of algebras for which the generalized HSP exhibits super-polynomial
speedup on a quantum computer compared to a classical one.
\end{abstract} 

\maketitle

\section{Introduction} \label{sec:intro}  
Quantum algorithms enable new methods of computation through the exploitation of
the natural phenomena associated with quantum systems. Sometimes, this makes
possible efficient (i.e.\ polynomial-time) solutions to computational problems
for which no efficient classical algorithm exists. While there is no proof that
quantum computation is super-polynomially faster than classical methods, there
is a growing body of evidence in support of such a claim.

Perhaps the most famous quantum algorithm exhibiting super-polynomial speedup
over classical ones is Shor's algorithm for integer
factorization~\cite{Shor_Alg}. Other examples include Shor's solution to the
discrete logarithm problem~\cite{Shor_Alg}, Simon's algorithm~\cite{Simon_Alg},
the unit and class group algorithms~\cite{Hallgren_QuantumPellPID}, and Pell's
equation and the principal ideal algorithm~\cite{Hallgren_QuantumPellPID}
(see~\cite{NielsenChuang_QCQI} for more details). Surprisingly, these
superficially diverse problems are all special cases of a unifying problem
called the \emph{Hidden Subgroup Problem}.

Given a group $\alg{G}$, a set $X$, and a function $f: G \to X$, the function
$f$ \emph{hides} a subgroup $\alg{D}\leq \alg{G}$ if $f(g) = f(h)$ if and only
if $gD = hD$ (i.e.\ $f$ is constant exactly on the cosets of $\alg{D}$ in
$\alg{G}$). We follow the convention in Universal Algebra of distinguishing
between an algebraic structure (e.g.\ a group) and its underlying set, using
$\m{G}$ for the former and $G$ for the latter.
\begin{center} \begin{tabular}{rl}
  \multicolumn{2}{l}{\textbf{Hidden Subgroup Problem (HSP)}} \\ \midrule
  \texttt{Input:} & group $\alg{G}$, \\
                  & function $f: G \to X$ hiding some subgroup \\[0.25em]
  \texttt{Task:}  & determine the subgroup $\alg{D}\leq \alg{G}$ that $f$ hides
\end{tabular} \end{center}
The group $\alg{G}$ may be specified either by providing a multiplication table
or by fixing a family $(\alg{G}_n)_{n\in \NN}$ of groups (e.g.\ the symmetric
groups $\alg{S}_n$) and providing an index $n$. Formally, these two kinds of
specification will determine different input sizes to the problem. In any case,
however, we \emph{always} consider the input size to be $\lg |G|$. The function
$f$ is given as an oracle (i.e.\ a black-box function). Typically, $\alg{D}$ is
`determined' by generators, but any other scheme which uniquely identifies
$\alg{D}$ amongst all other subgroups is permitted. 

When $\alg{G}$ is \emph{abelian}, $\HSP(\alg{G})$ includes all of the problems
in the previous paragraph, and a general solution in this setting is due to
Kitaev~\cite{Kitaev_AbStab}, though it includes many of the aforementioned
previously known algorithms. The case for non-abelian $\alg{G}$ has proven to be
more challenging.

The HSP for \emph{non-abelian} $\alg{G}$ has been solved in some quite specific
cases~\cite{RoettelerBoth_HSPNonAb, IvanyosMagniezSantha_HSPNonAb,
MooreEtal_HSPAffine, Gavinsky_HSPPolynearham, BaconChildsVanDam_HSPSemidirect,
IvanyosSanselmeSantha_HSPExtraspecial, IvanyosSanselmeSantha_HSPNil2,
DenneyMooreRussel_HSPPSL2}, but it remains open in general. The HSP for the
symmetric groups and the dihedral groups correspond to the \emph{graph
isomorphism} and \emph{shortest vector} problems, respectively, and so these
cases are of particular interest. Despite focused research over the years, no
general polynomial-time quantum algorithm has been found.

We propose a generalization of the Hidden Subgroup Problem to arbitrary
algebraic structures called \emph{universal algebras}, and call the generalized
problem the \emph{Hidden Kernel Problem (HKP)}. Perhaps the simplest non-trivial
instance of the HSP is when $\alg{G} = \ZZ_2^n$, and this is known as Simon's
Problem. Along the same lines, we consider the Hidden Kernel Problem for powers
of 2-element structures, and give a complete classification of which structures
admit a polynomial-time quantum solution, a polynomial-time classical solution,
no polynomial-time classical solution, and no polynomial-time quantum solution.
In particular, we fully classify all powers of 2-element universal algebras for
which a quantum algorithm achieves super-polynomial speedup when compared to a
classical algorithm.

We begin with some necessary background in Universal Algebra and Clone theory,
as well as a brief discussion of Quantum Computation in
Section~\ref{sec:background}. In Section~\ref{sec:HKP} we develop and state the
Hidden Kernel Problem and outline the plan of attack for analyzing the HKP on
powers of 2-element universal algebras (of which there are infinitely many). We
determine which of these algebras admit polynomial-time quantum and classical
solutions to their HKP in Section~\ref{sec:hkp_positive}, and in
Section~\ref{sec:hkp_negative} we determine which have an exponential
lower-bound for classical or quantum solutions. Taken together, these results
provide a complete classification of the quantum as well as classical
computational complexity of the HKP on powers of 2-element universal algebras.
Section~\ref{sec:concl} concludes with an overview of the results as well as a
discussion of some open questions about the Hidden Kernel Problem.

\section{Background} \label{sec:background}  
Universal Algebra seeks to include and extend classical algebraic results (for
instance, from group theory and ring theory). In bringing together diverse areas
of research, Universal Algebra provides a common scaffolding upon which to build
general techniques. ``What looks messy and complicated in a particular framework
may turn out to be simple and obvious in the proper general one,'' as
Smith~\cite{Smith_MaltsevVars} describes it.

The subsections below give a brief overview of the topics needed for the later
sections, and though self-contained, they are necessarily brief. The interested
reader is encouraged to consult~\cite{MMT_ALVBook}
and~\cite{Taylor_CloneTheory}, which are both good references for the algebraic
topics. The section concludes with a short overview of Quantum Computation, for
which~\cite{NielsenChuang_QCQI} is a more extensive reference.

\subsection{Universal Algebra} \label{subsec:algebra} 
Given a set $A$, an \emph{operation} on $A$ is a function $f: A^n \to A$, where
$n\in \NN$ is the arity of $f$. The set $A$ together with some operations
$(f_i)_{i\in I}$ is a \emph{universal algebra}, commonly denoted $\AA$. We
specify the set and the operations by writing
\[
  \AA = \left< A \;;\; (f_i)_{i\in I} \right>
  \qquad\text{or}\qquad
  \AA = \left< A \;;\; f_1, \dots, f_k \right>,
\]
depending on whether $I$ is finite. A non-empty subset $B\subseteq A$ which is
closed under all operations $f_i$ of $\AA$ is called a \emph{subalgebra},
written $\m{B}\leq \AA$. If $C\subseteq A$ is non-empty, then we define the
\emph{subalgebra generated by C} to be the smallest subalgebra containing $C$,
written $\Sg(C)$. These general definitions of algebras and subalgebras include
most of the classical algebraic structures (groups, rings, boolean algebras,
etc.) and allow for the unification of a host of results.

The appropriate generalization of a normal subgroup for a universal algebra
is a \emph{congruence}. Precisely, a congruence $\theta$ of an algebra $\AA$
is a binary equivalence relation on the set $A$ which is compatible with the
operations of $\AA$, meaning
\[
  a_1 \;\theta\; b_1, \dots, a_k \;\theta\; b_k
  \quad \Rightarrow \quad
  f(a_1, \dots, a_k) \;\theta\; f(b_1, \dots, b_k)
\]
for all operations $f$ of $\AA$ (infix notation is used here for binary
relations). As with subalgebras, given a set $D\subseteq A^2$, we define the
\emph{congruence generated by $D$} to be the smallest congruence containing $D$,
written $\Cg(D)$. Similar to normal subgroups, the quotient structure
$\AA/\theta$ is well-defined, and general versions of the classic isomorphism
theorems for groups and rings hold true in this context.

Two algebras $\AA$ and $\m{B}$ are said to be \emph{similar} if there is a
bijective correspondence between the $k$-ary operations of $\AA$ and $\m{B}$ for
each $k$. In this case, $f^{\AA}$ is written for the operation of $\AA$
corresponding to the operation $f^{\m{B}}$ of $\m{B}$. When there is no chance
for confusion, however, these superscripts are omitted and we simply use the
same symbol $f$ for both $f^{\AA}$ and $f^{\alg{B}}$. A \emph{homomorphism}
between similar structures $\AA$ and $\m{B}$ is a function $\phi: A\to B$ such
that for each $k$-ary operation $f^{\AA}$ of $\AA$ and all $a_1, \dots, a_k\in
A$,
\[
  \phi\big( f^{\AA}(a_1, \dots, a_k) \big)
  = f^{\m{B}}\big( \phi(a_1), \dots ,\phi(a_k) \big) 
\]
(i.e.\ $\phi$ is compatible with the operations of $\AA$ and $\m{B}$). This is
indicated by writing $\phi : \AA \to \m{B}$. Similar to groups and rings,
congruences are precisely the \emph{kernels} of homomorphisms
\[
  \ker(\phi)
  = \big\{ (a,b)\in A^2 \mid \phi(a) = \phi(b) \big\},
\]
but in contrast to groups the kernel is a binary relation.

Every algebra $\AA$ has at least two (possibly non-distinct) congruences --- the
trivial congruence $\Cong0$ and the universal congruence $\Cong1$,
\begin{align*}
  \Cong0
  &= \big\{ (a,a)\in A^2 \mid a\in A \big\} & \text{and} \\
  \Cong1
  &= A^2 = \big\{ (a,b)\in A^2 \mid a,b\in A \big\},
\end{align*}
As with groups, if these are the only congruences of $\AA$, then we call $\AA$
\emph{simple}.

The set of congruences of $\AA$ is denoted $\Con(\AA)$, and can be ordered by
inclusion, in which case $\Cong1$ and $\Cong0$ are the maximal and minimal
elements, respectively. More than this, $\Con(\AA)$ has the structure of a
lattice, where the operations of $\wedge$ and $\vee$ are defined
\[
  \theta \wedge \psi 
    \coloneqq \theta \cap \psi
  \qquad\text{and}\qquad
  \theta \vee \psi
    \coloneqq \Cg\big( \theta \cup \psi \big)
\]
for $\theta, \psi\in \Con(\AA)$. Perhaps surprisingly, different kinds of
lattice-theoretic structure in $\Con(\AA)$ are in correspondence with algebraic
structure in $\AA$, and this correspondence is a central area of study in the
field. Of particular interest for us is when $\Con(\AA)$ is a \emph{distributive
lattice}, meaning
\[
  \alpha \wedge \big( \beta \vee \gamma \big)
  = \big( \alpha \wedge \beta \big) \vee \big( \alpha \wedge \gamma \big)
\]
for all $\alpha, \beta, \gamma \in \Con(\AA)$. In this case, we say that $\AA$
is \emph{congruence distributive}.

\subsection{Clones} \label{subsec:clones} 
Given a set $F$ of operations on a common domain $A$, new operations may be
produced by composing operations. Formally, if $f\in F$ is $n$-ary and
$g_i\in F$ is $k_i$-ary for $i\in [n]$, then
\begin{multline*}
  h(x_{11}, \dots, x_{nk_n}) \\
  \coloneqq f\big( g_1(x_{11}, \dots, x_{1k_i}), \dots, g_n(x_{n1}, \dots, x_{nk_n}) \big)
\end{multline*}
is a $(\sum k_i)$-ary operation of $A$.

New operations of $A$ may also be produced by variable manipulations. Given
$n$-ary operation $f\in F$ and any function $\sigma: [n] \to [m]$,
\[
  h(x_1, \dots, x_m)
  \coloneqq f\big(x_{\sigma(1)}, \dots, x_{\sigma(n)}\big)
\]
is an $m$-ary operation of $A$. Less formally, given $n$-ary $f\in F$ we may
define operations of equal arity by permuting variables, operations of smaller
arity by identifying variables, and operations of larger arity by introducing
extraneous variables.

A \emph{clone} over a domain $A$ is a set of operations of $A$ which is closed
under composition and variable manipulations (as detailed above). The
\emph{clone generated by} a set of operations $F$, written $\Clo(F)$ is the
smallest clone containing $F$. Likewise, the \emph{clone of term operations} of
the algebra $\AA$, written $\Clo(\AA)$, is the clone generated by the operations
of $\AA$.

Given a fixed domain $A$, the set of clones over $A$ can be ordered by
inclusion. As with the set of congruences of an algebra, the set of clones over
$A$ forms a lattice with operations $\wedge$ and $\vee$ defined
\[
  \Clone{A} \wedge \Clone{B} \coloneqq \Clone{A}\cap \Clone{B}
  \qquad\text{and}\qquad
  \Clone{A} \vee \Clone{B} \coloneqq \Clo\big( \Clone{A}\cup \Clone{B} \big)
\]
for clones $\Clone{A}$ and $\Clone{B}$ over the domain $A$. We extend the
ordering on clones to algebras $\alg{B}$ and $\alg{D}$ over the common domain
$A$ by writing $\alg{B} \preceq \alg{D}$ if and only if $\Clo(\alg{B}) \subseteq
\Clo(\alg{D})$. If $\cl{C}$ is a clone over the domain $A$, then we will
slightly abuse this notation by writing $\cl{C} \preceq \alg{B}$ if
$\cl{C}\subseteq \Clo(\alg{B})$, and similarly for $\alg{B} \preceq \cl{C}$.

\subsection{Quantum Computation} \label{subsec:quantum_algs} 
Qubits are the information-theoretic foundation of quantum computation. A qubit
can have a \emph{pure} state of either $0$ or $1$, but may also be in a
\emph{superposition} of $0$ and $1$ states. Formally, this is represented by a
2-dimensional normed complex vector space (i.e.\ a Hilbert space),
\[
  \B
  \coloneqq \Cspan\big\{ \ket{0}, \ket{1}\big\}.
\]
We adopt the notation of Dirac~\cite{Dirac_Ket} where $\ket{\alpha}$ denotes a
column vector and $\bra{\alpha} \coloneqq \ket{\alpha}^{\dagger}$ denotes a row
vector (``$\dagger$'' represents the Hermitian adjoint).

A system of $k$ qubits is represented by the tensor power of $\B$, written
$\B^{\otimes k} \coloneqq \B\otimes \dots \otimes \B$. This vector space has
dimension $2^k$ with basis $\big\{ \ket{x_1} \otimes \dots \otimes \ket{x_k}
\mid x_i\in \{0,1\} \big\}$. Using the notation $\ket{x_1x_2} \coloneqq
\ket{x_1}\otimes \ket{x_1}$ gives us
\[
  \B^{\otimes k}
  = \Cspan\big\{ \ket{x_1 \cdots x_k } \mid x_i \in \{0,1\} \big\}.
\]

A \emph{quantum state} is a vector in $\B^{\otimes k}$,
\[
  \ket{\alpha}
  = \sum_{x_i\in \{0,1\}} \lambda_{x_1 \cdots x_k} \ket{x_1 \cdots x_k}
  \quad\text{such that}\quad
  \|\alpha\| = 1.
\]
The condition that $\|\alpha\| = 1$ is equivalent to $\sum |\lambda_{x_1 \cdots
x_k}|^2 = 1$, and so $\big( |\lambda_{x_1 \cdots x_k}|^2 \big)_{x_i\in \{0,1\}}$ is
regarded as a \emph{probability distribution}: when the quantum state
$\ket{\alpha}$ is measured, the pure state $\ket{y_1 \cdots y_k}$ will be
observed with probability $|\lambda_{y_1 \cdots y_k}|^2$.

The evolution over time of a quantum state $\ket{\alpha}$ is represented by the
action of a unitary operator $U$ on $\ket{\alpha}$. A given unitary operator can
be approximated using products and tensor products of operators from the
\emph{standard quantum gate set}. A \emph{quantum circuit} for the operator $U$
is this decomposition, typically given as a diagram. Figure~\ref{fig:simon} is
an example of such a diagram and Definition~\ref{defn:simon_circuit} is the
inline form of the same circuit.

One particular quantum gate which we will make use of is the Hadamard gate $H$,
defined by its action on the basis vectors of $\B$ by
\[
  H\ket{0}
  = \frac{1}{\sqrt{2}} \big( \ket{0} + \ket{1} \big),
  \qquad
  H\ket{1}
    = \frac{1}{\sqrt{2}} \big( \ket{0} - \ket{1} \big).
\]
A convenient formula for the action of $H^{\otimes n}$ on an arbitrary
$n$-qubit state is
\[
  H^{\otimes n}\ket{x}
    = \frac{1}{2^{n/2}} \sum_{a\in \{0,1\}^n} (-1)^{a\cdot x} \ket{a},
\]
where $a\cdot x$ is the dot product of $a$ and $x$ modulo $2$ (regarded as
$\ZZ_2$-vectors).

\section{The Hidden Kernel Problem} \label{sec:HKP}  
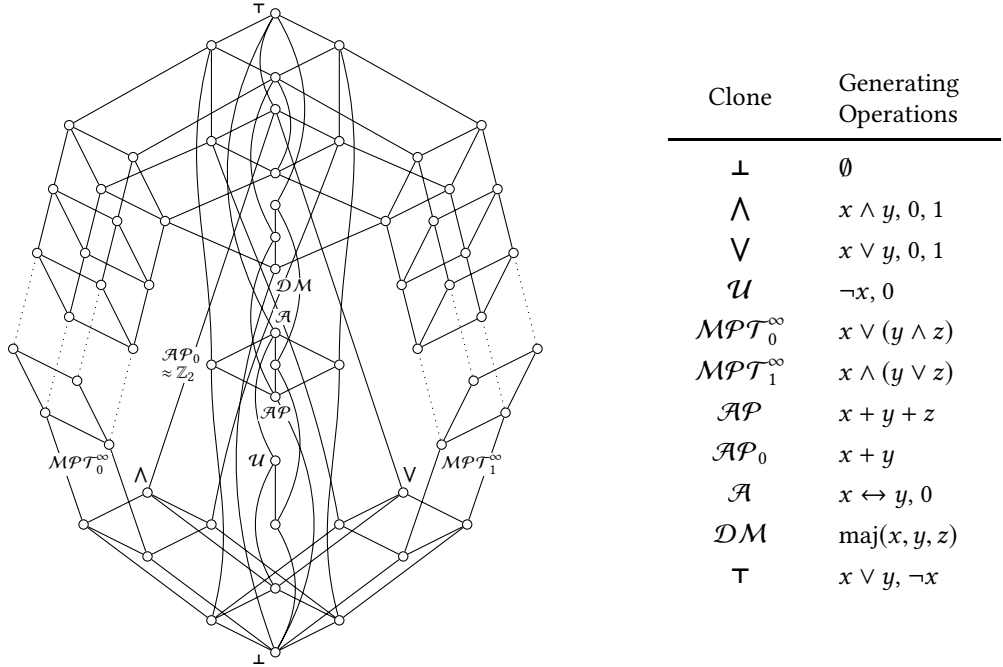
\begin{figure*} \centering
\begin{tikzpicture}[font=\tiny, yscale=0.85, xscale=0.85, baseline=(center.base)]
  \path (0, 0)    node[clone] (F) {}
       +(-1, 0.5) node[clone] (UP0) {}
       +(1, 0.5)  node[clone] (UP1) {}
       ++(0,1)    node[clone] (UM) {}
       ++(0,1)    node[clone] (UD) {}
       ++(0,1)    node[clone] (U) {}
       ++(0,1)    node[clone] (AP) {}
        +(-1,0.5) node[clone] (AP0) {}
        +(0,0.5)  node[clone] (AD) {}
        +(1,0.5)  node[clone] (AP1) {}
       ++(0,1)    node[clone] (A) {}
       ++(0,1)    node[clone] (DM) {}
       ++(0,0.5)  node[clone] (DP) {}
       ++(0,0.5)  node[clone] (D) {}
       ++(0,0.5)  node[clone] (MP) {}
        +(-1,0.5) node[clone] (MP0) {}
        +(1,0.5)  node[clone] (MP1) {}
       ++(0,1)    node[clone] (M) {}
       ++(0,0.5)  node[clone] (P) {}
        +(-1,0.5) node[clone] (P0) {}
        +(1,0.5)  node[clone] (P1) {}
       ++(0,1)    node[clone] (T) {}; 
  \path (UP0)
      ++(0,1.5)     node[clone] (MeetP1) {}
      ++(-1,-0.5)   node[clone] (MeetP) {}
       +(0,1)       node[clone] (Meet) {}
      ++(-1,0.5)    node[clone] (MeetP0) {}

      ++(0.4,1.25)  node[clone] (MPT0inf) {}
       +(-0.5,1)    node[clone] (PT0inf) {}
      ++(-1,0.5)    node[clone] (MT0inf) {}
       +(-0.5,1)    node[clone] (T0inf) {}
        (MPT0inf)
      ++(1.5/4,1.5) node[clone] (MPT04) {}
       +(-0.5,1)    node[clone] (PT04) {}
      ++(-1,0.5)    node[clone] (MT04) {}
       +(-0.5,1)    node[clone] (T04) {}
       (MPT04)
      ++(0.25,1)    node[clone] (MPT03) {}
       +(-0.5,1)    node[clone] (PT03) {}
      ++(-1,0.5)    node[clone] (MT03) {}
       +(-0.5,1)    node[clone] (T03) {}
       (MPT03)
      ++(0.25,1)    node[clone] (MPT02) {}
       +(-0.5,1)    node[clone] (PT02) {}
      ++(-1,0.5)    node[clone] (MT02) {}
       +(-0.5,1)    node[clone] (T02) {}; 
  \path (UP1)
      ++(0,1.5)      node[clone] (JoinP0) {}
      ++(1,-0.5)     node[clone] (JoinP) {}
       +(0,1)        node[clone] (Join) {}
      ++(1,0.5)      node[clone] (JoinP1) {}

      ++(-0.4,1.25)  node[clone] (MPT1inf) {}
       +(0.5,1)      node[clone] (PT1inf) {}
      ++(1,0.5)      node[clone] (MT1inf) {}
       +(0.5,1)      node[clone] (T1inf) {}
        (MPT1inf)
      ++(-1.5/4,1.5) node[clone] (MPT14) {}
       +(0.5,1)      node[clone] (PT14) {}
      ++(1,0.5)      node[clone] (MT14) {}
       +(0.5,1)      node[clone] (T14) {}
       (MPT14)
      ++(-0.25,1)    node[clone] (MPT13) {}
       +(0.5,1)      node[clone] (PT13) {}
      ++(1,0.5)      node[clone] (MT13) {}
       +(0.5,1)      node[clone] (T13) {}
       (MPT13)
      ++(-0.25,1)    node[clone] (MPT12) {}
       +(0.5,1)      node[clone] (PT12) {}
      ++(1,0.5)      node[clone] (MT12) {}
       +(0.5,1)      node[clone] (T12) {};  

  \draw (F) -- (UP0) -- (UM) -- (UP1) -- (F)
    (F) -- (MeetP)
    (F) -- (JoinP)
    (F) to[curveR] (UD)
    (F) to[curveR] (AP)
    (F) to[out=90+20,in=-90-20] (DM)
    (UP0) -- (MeetP0)
    (UP0) -- (JoinP0)
    (UP0) to[out=90-15,in=-90] (AP0)
    (UP1) -- (JoinP1)
    (UP1) -- (MeetP1)
    (UP1) to[out=90+15,in=-90] (AP1)
    (UM) to[curveL] (U)
    (UM) -- (Meet)
    (UM) -- (Join)
    (UD) -- (U)
    (UD) to[curveR] (AD)
    (U) to[curveL] (A)
    (AP) -- (AP0) -- (A) -- (AP1) -- (AP) -- (AD) -- (A)
    (AP) to[curveL] (DP)
    (AP0) to[out=90,in=-90-15] (P0)
    (AP1) to[out=90,in=-90+15] (P1)
    (AD) to[curveR] (D)
    (A) to[curveL] (T)
    (DM) -- (MPT02) -- (MP) -- (MPT12) -- (DM) -- (DP) -- (D)
    (DP) to[curveL] (P)
    (D) to[curveR] (T)
    (MP) -- (MP0) -- (M) -- (MP1) -- (MP)
    (MP) to[curveR] (P)
    (MP0) -- (P0)
    (MP1) -- (P1)
    (M) to[curveL] (T)
    (P) -- (P0) -- (T) -- (P1) -- (P); 
  \draw (MeetP) -- (MeetP0) -- (Meet) -- (MeetP1) -- (MeetP)
    (MeetP) -- (MPT0inf)
    (MeetP1) to[out=90-10,in=-90-20] (MP1)
    (MeetP0) -- (MT0inf)
    (Meet) -- (M)
    (MPT0inf) -- (MT0inf) -- (T0inf) -- (PT0inf) -- (MPT0inf)
    (MPT0inf) edge[dotted] (MPT04)
    (MT0inf) edge[dotted] (MT04)
    (PT0inf) edge[dotted] (PT04)
    (T0inf) edge[dotted] (T04)
    (MPT04) -- (MT04) -- (T04) -- (PT04) -- (MPT04)
    (MPT04) -- (MPT03) (MT04) -- (MT03) (PT04) -- (PT03) (T04) -- (T03)
    (MPT03) -- (MT03) -- (T03) -- (PT03) -- (MPT03)
    (MPT03) -- (MPT02) (MT03) -- (MT02) (PT03) -- (PT02) (T03) -- (T02)
    (MPT02) -- (MT02) -- (T02) -- (PT02) -- (MPT02)
    (MT02) -- (MP0)
    (T02) -- (P0)
    (PT02) -- (P);  
  \draw (JoinP) -- (JoinP0) -- (Join) -- (JoinP1) -- (JoinP)
    (JoinP) -- (MPT1inf)
    (JoinP0) to[out=90+10,in=-90+20] (MP0)
    (JoinP1) -- (MT1inf)
    (Join) -- (M)
    (MPT1inf) -- (MT1inf) -- (T1inf) -- (PT1inf) -- (MPT1inf)
    (MPT1inf) edge[dotted] (MPT14)
    (MT1inf) edge[dotted] (MT14)
    (PT1inf) edge[dotted] (PT14)
    (T1inf) edge[dotted] (T14)
    (MPT14) -- (MT14) -- (T14) -- (PT14) -- (MPT14)
    (MPT14) -- (MPT13) (MT14) -- (MT13) (PT14) -- (PT13) (T14) -- (T13)
    (MPT13) -- (MT13) -- (T13) -- (PT13) -- (MPT13)
    (MPT13) -- (MPT12) (MT13) -- (MT12) (PT13) -- (PT12) (T13) -- (T12)
    (MPT12) -- (MT12) -- (T12) -- (PT12) -- (MPT12)
    (MT12) -- (MP1)
    (T12) -- (P1)
    (PT12) -- (P);  

  \node[cloneLabel, above, xshift=-0.25em] at (Meet) {$\CloneWedge$};
  \node[cloneLabel, above, xshift=0.25em] at (Join) {$\CloneVee$};
  \node[cloneLabel, below left, xshift=0.43em] at (MPT0inf) {$\Clone{MPT}_0^\infty$};
  \node[cloneLabel, below right, xshift=-0.43em] at (MPT1inf) {$\Clone{MPT}_1^\infty$};
  \node[cloneLabel, left, yshift=0.2em] at (T) {$\CloneTop$};

  \node[cloneLabel, left, yshift=-0.25em] at (F) {$\CloneBot$};
  \node[cloneLabel, left] at (U) {$\Clone{U}$};
  \node[cloneLabel, above, xshift=0.3em] at (A) {$\Clone{A}$};
  \node[cloneLabel, left, align=left] at (AP0) {$\Clone{AP}_0$ \\ $\approx \ZZ_2$};
  \node[cloneLabel, below] at (AP) {$\Clone{AP}$};
  \node[cloneLabel, below right, xshift=-0.4em] at (DM) {$\Clone{DM}$};

  \node (center) at ($(current bounding box.north east)!0.5!(current bounding box.south west)$) {\phantom{$\cdot$}};
\end{tikzpicture} 
  \hspace{4em}
  \renewcommand{\tabcolsep}{1em}
  \begin{tabular}{cm{5.1em}}
    Clone                  & Generating \newline Operations \\ \midrule
    $\CloneBot$            & $\emptyset$            \\
    $\CloneWedge$          & $x \wedge y$, $0$, $1$ \\
    $\CloneVee$            & $x \vee y$, $0$, $1$   \\
    $\Clone{U}$            & $\neg x$, $0$          \\
    $\Clone{MPT}_0^\infty$ & $x \vee (y \wedge z)$  \\
    $\Clone{MPT}_1^\infty$ & $x \wedge (y \vee z)$  \\
    $\Clone{AP}$           & $x+y+z$                \\
    $\Clone{AP}_0$         & $x+y$                  \\
    $\Clone{A}$            & $x\Aiff y$, $0$        \\
    $\Clone{DM}$           & $\text{maj}(x,y,z)$    \\
    $\CloneTop$            & $x \vee y$, $\neg x$
  \end{tabular}
  \caption{ Diagram of Post's Lattice, the lattice of all clones on a 2-element
    domain (i.e.\ boolean clones). Clones which will be relevant later have been
    labelled, and operations generating each of them are given in the table
    to the right. Simon's Problem corresponds to the $\HKP$ on the operational
    clone of the group $\ZZ_2$, namely $\Clone{AP}_0$. }
  \Description{See caption.}
  \label{fig:posts_lattice}
\end{figure*}  

As described in the introduction, given a group $\alg{G}$, the function $f: G
\to X$ \emph{hides} a subgroup $\alg{D}\leq \alg{G}$ if $f(g) = f(h)$ if and
only if $gD = hD$.
\begin{center} \begin{tabular}{rl}
  \multicolumn{2}{l}{\textbf{Hidden Subgroup Problem (HSP)}} \\ \midrule
  \texttt{Input:}  & group $\alg{G}$, \\
    & oracle $f: G \to X$ hiding some subgroup  \\[0.25em]
  \texttt{Task:} & determine the subgroup $\alg{D}\leq \alg{G}$ that $f$ hides
\end{tabular} \end{center}
\noindent
A quantum polynomial-time solution to the HSP for \emph{abelian} $\alg{G}$ is
known~\cite{Kitaev_AbStab}, and this case contains many famous quantum
algorithms exhibiting super-polynomial speedup over classical solutions. The HSP
for \emph{non-abelian} $\alg{G}$ has been solved in some quite specific cases
and restrictive settings, but it remains open in general. In spite of focused
research, no general polynomial-time quantum algorithm for the HSP has been
found.

That the HSP has been efficiently solved for a large class of groups (abelian
ones) and an irregular constellation of groups otherwise suggests two
possibilities.
\begin{enumerate}[leftmargin=*]
  \item There is no quantum polynomial-time solution to the general HSP. It is
    suspected that quantum algorithms cannot solve \comp{NP-complete} problems
    in polynomial time, and it may be the case that the general HSP is
    \comp{NP-complete}.
  \item There is a quantum polynomial-time solution, but it requires a deeper
    insight into the problem than has been obtained over the past 30 years.
\end{enumerate}
In either of these cases, large additional classes of structures for which the
HSP has an efficient solution would be incredibly valuable. To this end, we
propose a generalization of the HSP to universal algebraic structures. Such a
generalization will extend the domain of the problem to classes of structures
with additional gradations of complexity, with the expectation that these these
gradations will be reflected in the complexity of the generalized problem.

\subsection{The Hidden Kernel Problem} \label{subsec:HKP} 
The Hidden Subgroup Problem can be framed as a problem about the \emph{quotient
structure} induced by the hidden subgroup $\alg{D}$, namely $G/D$ (the set of
cosets of $\alg{D}$ in $\alg{G}$). In this formulation, we are given $\alg{G}$
and a \emph{bijective} function $f: G/D\to X$ as an oracle, and tasked with
computing $\alg{D}\leq \alg{G}$. The set $G/D$ does \emph{not} have group
structure itself unless $\alg{D}$ is a normal subgroup. The progress which has
been made on the general HSP begins with a deep analysis of what limited
structure $G/D$ \emph{does} have for certain classes of groups.

For abelian groups, every subgroup is normal, so the structure of $G/D$ is quite
uniform. Restricting attention to only the \emph{normal} subgroups yields the
\emph{Hidden Normal Subgroup Problem}. It is this problem which we generalize to
obtain the \emph{Hidden Kernel Problem}.

Let $\AA$ be an algebra and $\theta$ a congruence of $\AA$. Following the
definition of a hidden subgroup, the function $f: A\to X$ \emph{hides} the
congruence $\theta$ if $f(a) = f(b)$ precisely when $a \;\theta\; b$.
\begin{center} \begin{tabular}{rl}
  \multicolumn{2}{l}{\textbf{The Hidden Kernel Problem (preliminary version)}} \\ \midrule
  \texttt{Input:}  & algebra $\alg{A}$, \\
    & oracle $f : A \to X$ hiding some congruence \\[0.25em]
  \texttt{Task:} & determine the congruence $\theta$ of $\alg{A}$ that $f$ hides
\end{tabular} \end{center}
It is not hard to show that a function $f: \alg{A}\to X$ hides a congruence
$\theta$ if and only if $f$ is a homomorphism with $\theta = \ker(f)$. This
observation leads to the general statement of the Hidden Kernel Problem.
\begin{center} \begin{tabular}{rl}
  \multicolumn{2}{l}{\textbf{The Hidden Kernel Problem (HKP)}} \\ \midrule
  \texttt{Input:}  & similar algebras $\alg{A}$ and $\alg{B}$, \\
    & homomorphism $\phi : \alg{A}\to \alg{B}$ (as an oracle) \\[0.25em]
  \texttt{Task:} & determine the congruence $\ker(\phi)$
\end{tabular} \end{center}
This generalization includes and extends the Hidden Normal Subgroup Problem, and
abstracts away the somewhat unnatural notion of ``hiding''.

\subsection{Simon's Problem and Post's Lattice} \label{subsec:simon_post} 
Perhaps the simplest instance of the Hidden Subgroup Problem is when the group
is an elementary abelian 2-group, $\alg{G} = \ZZ_2^n$. This instance of the
$\HSP$ is known as \emph{Simon's Problem}~\cite{Simon_Alg}, and historically is
one of the first examples of super-polynomial speedup when using a quantum
algorithm.

We will consider a universal algebraic version of Simon's Problem --- The Hidden
Kernel Problem on $\AA = \alg{B}^n$, where $\alg{B}$ is a 2-element algebra.
There are infinitely many such $\alg{B}$, and \emph{a priori} there seems to be
neither a natural starting point nor a systematic way in which to proceed. One
crucial observation is the following.

\begin{obs} \label{obs:clones_and_hkp} 
Let $\alg{B}$ and $\alg{D}$ be universal algebras such that $B = D$ and
$\Clo(\alg{B}) \subseteq \Clo(\alg{D})$. Each specific instance of
$\HKP(\alg{D}^n)$ is a specific instance of $\HKP(\alg{B}^n)$. A solution to all
instances of $\HKP(\alg{B}^n)$ yields a solution to all instances of
$\HKP(\alg{D}^n)$.
\end{obs}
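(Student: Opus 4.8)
The plan is to reduce the entire statement to a single order-reversing fact relating clones and congruences, after which ``hiding'' becomes a purely set-theoretic condition and the two assertions fall out. First I would record the key lemma: if $\Clo(\alg{B}) \subseteq \Clo(\alg{D})$ (equivalently $\alg{B} \preceq \alg{D}$), then applying the operations of each algebra coordinatewise gives $\Clo(\alg{B}^n) \subseteq \Clo(\alg{D}^n)$. Indeed, each basic operation of $\alg{B}^n$ is a basic operation of $\alg{B}$ applied in parallel across the $n$ coordinates; since that operation lies in $\Clo(\alg{B}) \subseteq \Clo(\alg{D})$, it is a term operation of $\alg{D}$, and its coordinatewise lift is therefore a term operation of $\alg{D}^n$. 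Because a congruence of an algebra is exactly a binary equivalence relation compatible with its clone of term operations, any $\theta \in \Con(\alg{D}^n)$ is compatible with $\Clo(\alg{D}^n) \supseteq \Clo(\alg{B}^n)$ and hence compatible with the basic operations of $\alg{B}^n$. Thus $\Con(\alg{D}^n) \subseteq \Con(\alg{B}^n)$: the richer algebra has fewer congruences. This containment is the engine of both assertions.

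For the first assertion, I would observe that $B = D$ forces $B^n = D^n$ as sets, so an oracle $f$ is a function on the common domain, independent of which algebra structure we impose. By the discussion preceding the statement, $f$ hides $\theta$ if and only if $\ker(f) = \theta$, a condition involving only $f$ and $\theta$ as subsets of $(B^n)^2$. Consequently, if $f$ is a valid input to $\HKP(\alg{D}^n)$ --- that is, $f$ hides some $\theta \in \Con(\alg{D}^n)$ --- then the lemma gives $\theta \in \Con(\alg{B}^n)$, and the very same $f$ hides this same $\theta$ now read as a congruence of $\alg{B}^n$. Hence every instance of $\HKP(\alg{D}^n)$ is, verbatim, an instance of $\HKP(\alg{B}^n)$.

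The second assertion is then the induced reduction. Given any instance of $\HKP(\alg{D}^n)$, reinterpret it as the identical instance of $\HKP(\alg{B}^n)$ and run the assumed solver; it returns the hidden congruence $\theta$, which is already the required answer for $\HKP(\alg{D}^n)$. The input size is literally unchanged ($\lg|D^n| = \lg|B^n|$) and no oracle calls are introduced, so the reduction preserves both classical and quantum running time with no overhead. The one point needing care --- and the only genuine obstacle in an otherwise formal argument --- is the output convention: the solver may identify $\theta$ by a generating set $S$ with $\Cg_{\alg{B}^n}(S) = \theta$, and I must check that $S$ still determines $\theta$ over $\alg{D}^n$. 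This again follows from the lemma: from $S \subseteq \theta$ and $\theta \in \Con(\alg{D}^n)$ we get $\Cg_{\alg{D}^n}(S) \subseteq \theta$, while $\Clo(\alg{B}^n) \subseteq \Clo(\alg{D}^n)$ forces every compatible relation containing $S$ over $\alg{D}^n$ to contain $\Cg_{\alg{B}^n}(S) = \theta$, so $\theta \subseteq \Cg_{\alg{D}^n}(S)$. Hence $\Cg_{\alg{D}^n}(S) = \theta$ and the identification transfers unambiguously, completing the reduction.
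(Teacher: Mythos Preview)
Your proposal is correct and rests on the same idea the paper uses: the paper's entire justification is the single sentence that every homomorphism of $\alg{D}^n$ is also a homomorphism of $\alg{B}^n$ when $\Clo(\alg{B}) \subseteq \Clo(\alg{D})$, which is equivalent to your congruence containment $\Con(\alg{D}^n) \subseteq \Con(\alg{B}^n)$. Your treatment is more thorough---in particular your check that a generating set for $\theta$ over $\alg{B}^n$ still generates $\theta$ over $\alg{D}^n$ is a genuine detail the paper leaves implicit---but the route is the same.
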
 

This observation follows from the fact that every homomorphism of $\alg{D}^n$ is
also a homomorphism of $\alg{B}^n$ when $\Clo(\alg{B}) \subseteq \Clo(\alg{D})$
($\Leftrightarrow \alg{B}\preceq \alg{D}$). The observation implies that
$\HKP(\alg{B}^n)$ is equivalent to $\HKP(\alg{D}^n)$ whenever $\Clo(\alg{B}) =
\Clo(\alg{D})$. Instead of considering all possible 2-element algebras
$\alg{B}$, we therefore consider only those algebras with distinct clones of
term operations.

The task of describing all clones of operations on a 2-element domain (also
known as boolean clones), was famously undertaken in 1941 by Emil
Post~\cite{Post_Lattice}. Ordered by inclusion, the set of all boolean clones is
known as \emph{Post's Lattice}, and has a particularly regular structure (see
Figure~\ref{fig:posts_lattice}). The operations given in
Figure~\ref{fig:posts_lattice} are in terms of the familiar boolean operations,
with the possible exception of the \emph{majority} function, defined
\[
  \maj(x,y,z)
  \coloneqq \big( x \wedge y \big) \vee (x \wedge z) \vee (y \wedge z).
\]
An important property of this operation is that 
\[
  x 
  = \maj(y,x,x)
  = \maj(x,y,x)
  = \maj(x,x,y).
\]

We end this subsection by defining the quantum circuit which will yield a
polynomial-time quantum solution to $\HKP(\alg{B}^n)$ for many of the algebras
we consider. This circuit is essentially the same as Simon's original
circuit~\cite{Simon_Alg}.

\begin{defn} \label{defn:simon_circuit} 
Let $\phi: \{0,1\}^n \to \{0,1\}^m$, and define the $2^{n+m}$-dimensional
unitary transformation $\widehat{\phi}$ by its action on basis vectors,
\[
  \widehat{\phi} \big( \ket{x} \otimes \ket{y} \big)
  \coloneqq \ket{x} \otimes \ket{y + \phi(x)}, 
\]
where $x\in \{0,1\}^n$, $y\in \{0,1\}^m$, and ``+'' is componentwise addition
modulo $2$. The quantum circuit $\Simon_\phi(n)$ is the circuit
\[
  \Simon_\phi(n)
  \coloneqq \big( H^{\otimes n} \otimes I_m \big) \;\widehat{\phi}\;
    \big( H^{\otimes n} \otimes I_m \big)
\]
of size $n+m$, where measurement is performed on the first $n$ qubits. The
circuit $\Simon_\phi(n)$ is given graphically in Figure~\ref{fig:simon}. Note
that we can always take $m\leq n$, so the circuit is of size $\Theta(n)$.
\end{defn}  

\begin{figure} \centering 
\[
  \Simon_\phi(n) 
  \quad = \quad
\begin{tikzpicture}[font=\scriptsize, baseline=(center.base)]
  \path (0,0) node (n1) {$\ket{0}$}
      ++(1,0) node[gate-c] (n1H1) {$H$}
      ++(1,0) coordinate (n1phi)
      ++(1,0) node[gate-c] (n1H2) {$H$}
      ++(1,0) node[gate-r] (n1meas) {\tikzMeasure}
      ++(0.75,0) node (n1end) {}
        (n1)
      ++(0,-0.5) node (ndots1) {$\vdots$}
      ++(0,-0.75) node (nn) {$\ket{0}$}
      ++(0,-0.5) node (m1) {$\ket{0}$}
      ++(0,-0.5) node (mdots1) {$\vdots$}
      ++(0,-0.75) node (mm) {$\ket{0}$};
  \coordinate (t) at ($(n1)!0.5!(mm)$);
  \path (n1H1|-nn) node[gate-c] (nnH1) {$H$}
        (n1phi|-t) node[gate-r, minimum height=10em, minimum width=1.55em] (phi) {$\widehat{\phi}$}
        (n1H2|-nn) node[gate-c] (nnH2) {$H$}
        (n1meas|-nn) node[gate-r] (nnmeas) {\tikzMeasure}
        (n1end|-nn) node (nnend) {}
        (n1end|-m1) node (m1end) {}
        (n1end|-mm) node (mmend) {}
        (n1meas|-ndots1) node (ndots2) {$\vdots$}
        (n1meas|-mdots1) node (mdots2) {$\vdots$};
  \path[->] (n1) edge (n1H1) (n1H1) edge (n1-|phi.west) (phi.east|-n1) edge (n1H2) (n1H2) edge (n1meas) (n1meas) edge (n1end)
            (nn) edge (nnH1) (nnH1) edge (nn-|phi.west) (phi.east|-nn) edge (nnH2) (nnH2) edge (nnmeas) (nnmeas) edge (nnend)
            (m1) edge (m1-|phi.west) (phi.east|-m1) edge (m1end)
            (mm) edge (mm-|phi.west) (phi.east|-mm) edge (mmend);
  \draw[densely dotted] (n1end.north west) -- (n1end.north east-|n1end.east) -- node[auto]{$n$} (nnend.south east-|nnend.east) -- (nnend.south west);
  \draw[densely dotted] (m1end.north west) -- (m1end.north east-|m1end.east) -- node[auto]{$m$} (mmend.south east-|mmend.east) -- (mmend.south west);
  \node (center) at ($(current bounding box.north east)!0.5!(current bounding box.south west)$) {\phantom{$\cdot$}};
\end{tikzpicture} 
\]
  \caption{ The circuit $\Simon_\phi(n)$ of Definition~\ref{defn:simon_circuit}. }
  \Description{See Definition~\ref{defn:simon_circuit}.}
  \label{fig:simon}
\end{figure}  

\section{The HKP and Post's Lattice: Algorithms} \label{sec:hkp_positive}  
In this section we present positive results for solving the Hidden Kernel
Problem for Post's Lattice in polynomial time using classical and
quantum algorithms. The main quantum result is Theorem~\ref{thm:AP_MPT_quantum}
and the main classical result is Corollary~\ref{cor:classical_CD}, stated below.

\begin{thm} \label{thm:AP_MPT_quantum} 
Let $\alg{B}$ be a 2-element algebra such that one of
\begin{enumerate}
  \item $\Clone{MPT}_0^{\infty} \preceq \alg{B}$,
  \item $\Clone{MPT}_1^{\infty} \preceq \alg{B}$, or
  \item $\Clone{AP} \preceq \alg{B}$
\end{enumerate}
holds and let $\AA = \alg{B}^n$. The quantum circuit $\Simon_\phi(n)$ of
Definition~\ref{defn:simon_circuit} solves $\HKP(\AA)$ with probability $1 -
1/\tau$ in $\Theta(n + \lg(\tau))$ iterations.
\end{thm}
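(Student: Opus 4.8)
The plan is to reduce all three hypotheses to three \emph{minimal} cases and, in each, to prove that the hidden congruence $\ker(\phi)$ is \emph{linear}---its classes are the cosets of a subspace $D\le \ZZ_2^n$---after which $\Simon_\phi(n)$ extracts $D$ exactly as in Simon's algorithm. First I would apply Observation~\ref{obs:clones_and_hkp}: each hypothesis asserts $\Clo(\alg C)\subseteq \Clo(\alg B)$ for the fixed algebra $\alg C$ whose term clone is $\Clone{MPT}_0^\infty$, $\Clone{MPT}_1^\infty$, or $\Clone{AP}$, so a solution of $\HKP(\alg C^n)$ solves $\HKP(\alg B^n)$. The same inclusion gives $\ker(\phi)\in\Con(\alg B^n)\subseteq\Con(\alg C^n)$, so it suffices to describe $\Con(\alg C^n)$ for these three $\alg C$ and to show the one circuit $\Simon_\phi(n)$ recovers the relevant $D$.

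The structural heart is the description of $\Con(\alg C^n)$. In the affine case $\Clo(\alg C)=\Clone{AP}$, the generator $x+y+z$ is a Mal'cev term and $\alg C^n$ is (term-equivalent to) the idempotent reduct of the $\ZZ_2$-space $\ZZ_2^n$, whose congruences are exactly $\theta_D=\{(x,y):x\oplus y\in D\}$ for subspaces $D\le\ZZ_2^n$, with classes the cosets of $D$. For $\Clone{MPT}_0^\infty$, generated by $d_0(x,y,z)=x\vee(y\wedge z)$, the term clone is a \emph{proper} subclone of the lattice clone, yet both lattice operations arise as \emph{polynomials} of $\alg C^n$: one has the term $d_0(x,y,y)=x\vee y$ and, using the element $\bar 0=(0,\dots,0)$, the polynomial $d_0(\bar 0,y,z)=y\wedge z$. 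Consequently every congruence of $\alg C^n$ respects $\wedge$ and $\vee$ and hence is a congruence of the distributive lattice $\mathbf 2^n$, while conversely every lattice congruence respects the term $d_0$; so $\Con(\alg C^n)$ coincides with the congruence lattice of $\mathbf 2^n$. The congruences of the Boolean lattice $\mathbf 2^n$ are precisely the product congruences $\theta_S=\{(x,y):x_i=y_i\text{ for }i\in S\}$ for $S\subseteq[n]$, and each is again linear, namely $\theta_S=\theta_{D_S}$ with coordinate subspace $D_S=\{d:d_i=0\text{ for }i\in S\}$. The case $\Clone{MPT}_1^\infty$ is the De Morgan dual, replacing $\bar 0$ by $\bar 1$. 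This collapse---forcing the \emph{a priori} unruly congruences of a power of a semilattice-bearing algebra down to the product congruences via the polynomial realization of $\wedge$ and $\vee$---is the step I expect to be the main obstacle.

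With linearity established, the quantum analysis is Simon's. Writing the fibers of $\phi$ as the cosets of $D$, the state of $\Simon_\phi(n)$ before measurement is $\tfrac{1}{2^n}\sum_{a,x}(-1)^{a\cdot x}\ket{a}\otimes\ket{\phi(x)}$, and grouping the sum over $x$ by cosets shows the amplitude of $\ket{a}\otimes\ket{\phi(u\oplus D)}$ to be $\tfrac{(-1)^{a\cdot u}}{2^n}\sum_{d\in D}(-1)^{a\cdot d}$, which vanishes unless $a\in D^{\perp}$ and otherwise equals $\pm|D|/2^n$; hence one run returns a uniformly random element of $D^{\perp}$. I would then collect samples until they span $D^{\perp}$, compute $D=(D^{\perp})^{\perp}$ (equivalently, in the $\Clone{MPT}$ cases recover $S$ as the union of the supports of a spanning set), and output $\ker(\phi)=\theta_D$. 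The iteration count follows from the standard fact that $k$ uniform vectors fail to span a subspace of dimension at most $n$ with probability below $2^{\,n-k}$; taking $k=n+\lceil\lg\tau\rceil$ bounds the failure probability by $1/\tau$, and since each run of $\Simon_\phi(n)$ uses $\Theta(n)$ gates the algorithm succeeds with probability $1-1/\tau$ in $\Theta(n+\lg\tau)$ iterations.
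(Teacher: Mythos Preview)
Your proposal is correct and follows the same overall architecture as the paper: reduce via Observation~\ref{obs:clones_and_hkp} to the three minimal clones, show that in each case every congruence of $\alg{C}^n$ is already a congruence of $\ZZ_2^n$ (equivalently, linear with classes the cosets of some $D\le\ZZ_2^n$), and then invoke the Simon analysis. For the $\Clone{AP}$ case your argument and the paper's Lemma~\ref{lem:AP_cong} are essentially identical.

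The genuine difference is in the $\Clone{MPT}$ cases. The paper argues via J\'{o}nsson terms: it exhibits a short J\'{o}nsson chain built from the generator of $\Clone{MPT}_0^\infty$ (respectively $\Clone{MPT}_1^\infty$), concludes that $\alg{B}^n$ is congruence distributive, and then applies the general Lemma~\ref{lem:classical_CD_projs} to obtain that every congruence is the kernel of a coordinate projection. You instead use a polynomial-clone argument: $x\vee y$ is a \emph{term} of $\Clone{MPT}_0^\infty$ while $y\wedge z$ is only a \emph{polynomial} of $\alg{C}^n$ (via the element $\bar 0$), but since congruences respect polynomials this already forces $\Con(\alg{C}^n)=\Con(\mathbf 2^n)$, and you then cite the standard fact that the Boolean lattice $\mathbf 2^n$ has only product congruences. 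Your route is pleasantly elementary---no J\'{o}nsson machinery---and makes transparent why adding the single parameter $\bar 0$ (or $\bar 1$) collapses the congruence lattice; the paper's route is more self-contained, since Lemma~\ref{lem:classical_CD_projs} actually \emph{proves} the product-congruence fact you invoke, and it applies uniformly to any product of simple factors that happens to be congruence distributive. Either way one lands on the same linear description of $\ker(\phi)$, after which your Simon analysis and the paper's Theorem~\ref{thm:simon} coincide.
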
 

\begin{cor} \label{cor:classical_CD}  
Let $\alg{B}$ be a 2-element algebra such that one of
\begin{enumerate}
  \item $\Clone{MPT}_0^{\infty} \preceq \alg{B}$,
  \item $\Clone{MPT}_1^{\infty} \preceq \alg{B}$, or
  \item $\Clone{DM} \preceq \alg{B}$
\end{enumerate}
holds and let $\AA = \alg{B}^n$. Then $\HKP(\AA)$ is solvable classically in
$\O(n)$ time.
\end{cor}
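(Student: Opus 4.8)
The plan is to reduce $\HKP(\alg B^n)$ to finding a single set $S\subseteq[n]$ of ``relevant'' coordinates and then reading $S$ off with $O(n)$ oracle calls. Write $A=\{0,1\}^n$, let $\mathbf 0$ be the all-zero tuple and $e_i$ the tuple with a single $1$ in position $i$, and say $\phi$ \emph{depends on coordinate $i$} if flipping the $i$-th entry of some input changes the value of $\phi$. Set $S=\{\,i:\phi\text{ depends on coordinate }i\,\}$ and let $\pi_S\colon\alg B^n\to\alg B^{S}$ be the projection onto the coordinates in $S$. The algorithm queries $\phi(\mathbf 0)$ once and then, for each $i$, queries $\phi(e_i)$, placing $i\in S$ exactly when $\phi(e_i)\neq\phi(\mathbf 0)$; it outputs $\ker(\phi)=\ker(\pi_S)$, specified by the coordinate set $S$. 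This uses $n+1$ evaluations of the oracle and $O(n)$ additional bookkeeping, so it runs in $\O(n)$.

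Correctness rests on a single structural claim, which is precisely the congruence-distributivity content signalled by the corollary's name: for each of the three clone conditions, \emph{every} congruence of $\alg B^n$ is a product (coordinate) congruence, i.e.\ of the form $\ker(\pi_T)=\{(a,b):a|_T=b|_T\}$ for some $T\subseteq[n]$. Granting this, $\ker(\phi)=\ker(\pi_S)$ for the $S$ above, and the first-isomorphism factorisation $\phi=\psi\circ\pi_S$ has $\psi$ injective. Hence for $i\in S$ the map $\pi_S$, and therefore $\phi$, takes distinct values on every pair of inputs differing in coordinate $i$, so in particular $\phi(e_i)\neq\phi(\mathbf 0)$; while for $i\notin S$ a flip of coordinate $i$ never changes $\phi$, so $\phi(e_i)=\phi(\mathbf 0)$. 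The key is that a product congruence has no ``skew'' behaviour: a relevant coordinate matters at \emph{every} input, not only in some middle region, which is exactly what lets us detect relevance at the single point $\mathbf 0$.

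First I would dispatch the majority case $\Clone{DM}\preceq\alg B$ via the Baker--Pixley theorem. A congruence $\theta$ of $\alg B^n$ is a compatible binary relation, i.e.\ a subalgebra of $\alg B^{2n}$ on the coordinates $(a_1,\dots,a_n,b_1,\dots,b_n)$; since $\alg B$ has the term $\maj$, Baker--Pixley says $\theta$ is determined by its projections onto pairs of these $2n$ coordinates. Reflexivity (the diagonal lies in $\theta$) forces every projection onto two coordinates coming from distinct original indices $i\neq j$ to be all of $\{0,1\}^2$, while the projection onto a pair $(a_i,b_i)$ sharing an index $i$ is a reflexive symmetric relation on a two-element set, hence $\Cong 0$ or $\Cong 1$. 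Reading off which same-index projections are trivial then exhibits $\theta$ as the product congruence $\ker(\pi_T)$ with $T=\{\,i:\text{the }(a_i,b_i)\text{-projection is }\Cong 0\,\}$, establishing the claim.

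The hard part is the two $\Clone{MPT}$ cases. Here the generating operation $x\vee(y\wedge z)$ (and dually $x\wedge(y\vee z)$) need not yield a finite near-unanimity term, so Baker--Pixley does not apply off the shelf and the product structure must be extracted directly from the operation. I would exploit its absorption identities $f(x,y,x)=f(x,x,z)=x$ together with monotonicity to show that the only equivalence relations on $\{0,1\}^n$ compatible with $f$ are the coordinate congruences $\ker(\pi_T)$: if a congruence identifies two tuples differing in coordinate $i$, then applying $f$ with suitable arguments should propagate this to the full fibre over coordinate $i$, while monotonicity should prevent any identification from ``mixing'' distinct coordinates. Verifying this uniform statement --- equivalently, that these monotone reducts of the two-element lattice are congruence distributive and skew-free on all powers --- is where the real work lies; the majority case and the algorithm itself are comparatively routine once it is in hand.
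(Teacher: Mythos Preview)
Your algorithm and the structural reduction to ``every congruence of $\alg{B}^n$ is a projection kernel'' are exactly what the paper does, and your Baker--Pixley argument for the $\Clone{DM}$ case is a valid alternative (the paper instead just notes that $\maj$ by itself is a one-term J\'onsson sequence). The gap is the two $\Clone{MPT}$ cases: you correctly flag them as the crux but leave them unproven, proposing a direct monotonicity-and-absorption attack that you yourself describe as ``where the real work lies''.

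No real work is needed here: the tool you are missing is J\'onsson's theorem, which says an algebra is congruence distributive iff it has ternary terms $J_1,\dots,J_{2m+1}$ satisfying the J\'onsson identities. From the generator $f(x,y,z)=x\vee(y\wedge z)$ of $\Clone{MPT}_0^{\infty}$ one derives $x\vee y=f(x,y,y)$ and then the three-term sequence
\[
  J_1(x,y,z)=x\vee(y\wedge z),\qquad J_2(x,y,z)=x\vee z,\qquad J_3(x,y,z)=z\vee(x\wedge y),
\]
all of which lie in the clone (they are $f$, $f(x,z,z)$, and $f(z,x,y)$); the dual sequence works for $\Clone{MPT}_1^{\infty}$. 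A direct check shows these satisfy the J\'onsson identities, so $\alg{B}^n$ is congruence distributive. Since each factor $\alg{B}$ is simple, every $\eta_i=\ker(\proj_i)$ is a coatom of $\Con(\alg{B}^n)$, and for any congruence $\theta$ distributivity gives
\[
  \theta=\theta\vee\bigwedge_{i}\eta_i=\bigwedge_{i}(\theta\vee\eta_i)=\bigwedge_{i\in T}\eta_i=\ker(\proj_T),
\]
with $T=\{i:\theta\leq\eta_i\}$. Your intuition that these clones lack a near-unanimity term is correct, but J\'onsson terms are precisely the weakening that still yields congruence distributivity, and they make the $\Clone{MPT}$ cases just as routine as $\Clone{DM}$.
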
 

The best way to summarize these two results is in terms of Post's Lattice, see
Figure~\ref{fig:PL_positive}. In particular, powers of algebras whose clones are
contained in the interval between $\Clone{AP}$ and $\Clone{A}$ have $\HKP$
solvable in polynomial-time using a quantum algorithm, but not a classical one
(see Theorem~\ref{thm:no_classical}). Combined with the results of the next
section, this provides a complete quantum and classical classification of the
algorithmic complexity of the $\HKP$ for powers of 2-element algebras.

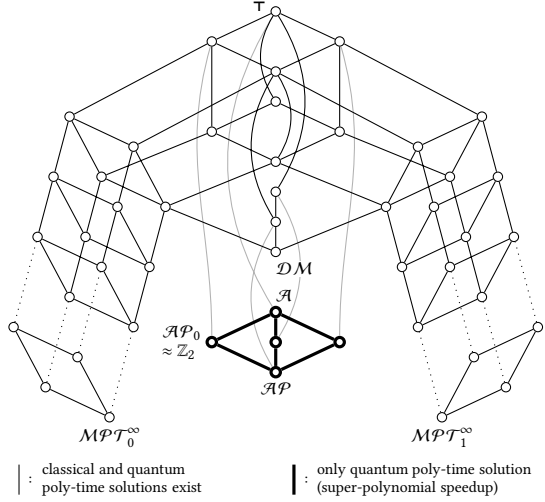
\begin{figure} \centering 
\begin{tikzpicture}[font=\tiny, yscale=0.8, xscale=0.85]
  \node at ($(-2.6, 3.25)-(0,4)$) (MPT0infX) {};
  \node at ($(2.6, 3.25)-(0,4)$)  (MPT1infX) {};
  \path (0, 0)    node[very thick, clone] (AP) {}
        +(-1,0.5) node[very thick, clone] (AP0) {}
        +(0,0.5)  node[very thick, clone] (AD) {}
        +(1,0.5)  node[very thick, clone] (AP1) {}
       ++(0,1)    node[very thick, clone] (A) {}
       ++(0,1)    node[clone] (DM) {}
       ++(0,0.5)  node[clone] (DP) {}
       ++(0,0.5)  node[clone] (D) {}
       ++(0,0.5)  node[clone] (MP) {}
        +(-1,0.5) node[clone] (MP0) {}
        +(1,0.5)  node[clone] (MP1) {}
       ++(0,1)    node[clone] (M) {}
       ++(0,0.5)  node[clone] (P) {}
        +(-1,0.5) node[clone] (P0) {}
        +(1,0.5)  node[clone] (P1) {}
       ++(0,1)    node[clone] (T) {}; 
  \path (MPT0infX)  node[clone] (MPT0inf) {}
       +(-0.5,1)    node[clone] (PT0inf) {}
      ++(-1,0.5)    node[clone] (MT0inf) {}
       +(-0.5,1)    node[clone] (T0inf) {}
        (MPT0inf)
      ++(1.5/4,1.5) node[clone] (MPT04) {}
       +(-0.5,1)    node[clone] (PT04) {}
      ++(-1,0.5)    node[clone] (MT04) {}
       +(-0.5,1)    node[clone] (T04) {}
       (MPT04)
      ++(0.25,1)    node[clone] (MPT03) {}
       +(-0.5,1)    node[clone] (PT03) {}
      ++(-1,0.5)    node[clone] (MT03) {}
       +(-0.5,1)    node[clone] (T03) {}
       (MPT03)
      ++(0.25,1)    node[clone] (MPT02) {}
       +(-0.5,1)    node[clone] (PT02) {}
      ++(-1,0.5)    node[clone] (MT02) {}
       +(-0.5,1)    node[clone] (T02) {}; 
  \path (MPT1infX)   node[clone] (MPT1inf) {}
       +(0.5,1)      node[clone] (PT1inf) {}
      ++(1,0.5)      node[clone] (MT1inf) {}
       +(0.5,1)      node[clone] (T1inf) {}
        (MPT1inf)
      ++(-1.5/4,1.5) node[clone] (MPT14) {}
       +(0.5,1)      node[clone] (PT14) {}
      ++(1,0.5)      node[clone] (MT14) {}
       +(0.5,1)      node[clone] (T14) {}
       (MPT14)
      ++(-0.25,1)    node[clone] (MPT13) {}
       +(0.5,1)      node[clone] (PT13) {}
      ++(1,0.5)      node[clone] (MT13) {}
       +(0.5,1)      node[clone] (T13) {}
       (MPT13)
      ++(-0.25,1)    node[clone] (MPT12) {}
       +(0.5,1)      node[clone] (PT12) {}
      ++(1,0.5)      node[clone] (MT12) {}
       +(0.5,1)      node[clone] (T12) {};  

  \draw[misc] (AP) to[curveL] (DP)
    (AP0) to[out=90,in=-90-15] (P0)
    (AP1) to[out=90,in=-90+15] (P1)
    (AD) to[curveR] (D)
    (A) to[curveL] (T);
  \draw[superpoly] (AP) -- (AP0) -- (A) -- (AP1) -- (AP) -- (AD) -- (A);
  \draw[class_and_quant] (DM) -- (MPT02) -- (MP) -- (MPT12) -- (DM) -- (DP) -- (D)
    (DP) to[curveL] (P)
    (D) to[curveR] (T)
    (MP) -- (MP0) -- (M) -- (MP1) -- (MP)
    (MP) to[curveR] (P)
    (MP0) -- (P0)
    (MP1) -- (P1)
    (M) to[curveL] (T)
    (P) -- (P0) -- (T) -- (P1) -- (P); 
  \draw[class_and_quant] (MPT0inf) -- (MT0inf) -- (T0inf) -- (PT0inf) -- (MPT0inf)
    (MPT0inf) edge[dotted] (MPT04)
    (MT0inf) edge[dotted] (MT04)
    (PT0inf) edge[dotted] (PT04)
    (T0inf) edge[dotted] (T04)
    (MPT04) -- (MT04) -- (T04) -- (PT04) -- (MPT04)
    (MPT04) -- (MPT03) (MT04) -- (MT03) (PT04) -- (PT03) (T04) -- (T03)
    (MPT03) -- (MT03) -- (T03) -- (PT03) -- (MPT03)
    (MPT03) -- (MPT02) (MT03) -- (MT02) (PT03) -- (PT02) (T03) -- (T02)
    (MPT02) -- (MT02) -- (T02) -- (PT02) -- (MPT02)
    (MT02) -- (MP0)
    (T02) -- (P0)
    (PT02) -- (P);  
  \draw[class_and_quant] (MPT1inf) -- (MT1inf) -- (T1inf) -- (PT1inf) -- (MPT1inf)
    (MPT1inf) edge[dotted] (MPT14)
    (MT1inf) edge[dotted] (MT14)
    (PT1inf) edge[dotted] (PT14)
    (T1inf) edge[dotted] (T14)
    (MPT14) -- (MT14) -- (T14) -- (PT14) -- (MPT14)
    (MPT14) -- (MPT13) (MT14) -- (MT13) (PT14) -- (PT13) (T14) -- (T13)
    (MPT13) -- (MT13) -- (T13) -- (PT13) -- (MPT13)
    (MPT13) -- (MPT12) (MT13) -- (MT12) (PT13) -- (PT12) (T13) -- (T12)
    (MPT12) -- (MT12) -- (T12) -- (PT12) -- (MPT12)
    (MT12) -- (MP1)
    (T12) -- (P1)
    (PT12) -- (P);  

  \node[cloneLabel, below] at (MPT0inf) {$\Clone{MPT}_0^\infty$};
  \node[cloneLabel, below] at (MPT1inf) {$\Clone{MPT}_1^\infty$};
  \node[cloneLabel, left, yshift=0.2em] at (T) {$\CloneTop$};

  \node[cloneLabel, above, xshift=0.3em] at (A) {$\Clone{A}$};
  \node[cloneLabel, left, align=left] at (AP0) {$\Clone{AP}_0$ \\ $\approx \ZZ_2$};
  \node[cloneLabel, below] at (AP) {$\Clone{AP}$};
  \node[cloneLabel, below right, xshift=-0.4em] at (DM) {$\Clone{DM}$};

  \coordinate (key_center) at ($(current bounding box.north east)!0.5!(current bounding box.south west)$);
  \coordinate (class_and_quant_key) at ($(key_center)+(-3.8,-4.25)$);
  \draw[class_and_quant] ($(class_and_quant_key.west)+(-0.215,-0.25)$) -- node[black, right]{:} ($(class_and_quant_key.west)+(-0.215,0.25)$);
  \node[anchor=west, align=left] at (class_and_quant_key) {classical and quantum \\ poly-time solutions exist};
  \coordinate (superpoly_key) at ($(key_center)+(0.5,-4.25)$);
  \draw[superpoly] ($(superpoly_key.west)+(-0.215,-0.25)$) -- node[black, right]{:} ($(superpoly_key.west)+(-0.215,0.25)$);
  \node[anchor=west, align=left] at (superpoly_key) {only quantum poly-time solution \\ (super-polynomial speedup)};
\end{tikzpicture} 
  \caption{ The fragment of Post's Lattice for which the $\HKP$ is solvable in
    polynomial-time using a quantum algorithm (whole diagram). Bold lines
    indicate clones exhibiting a super-polynomial speedup with a quantum
    algorithm (i.e.\ no classical polynomial-time algorithm exists, see
    Theorem~\ref{thm:no_classical}), while plain lines indicate clones for which
    a classical polynomial-time algorithm exists. }
  \Description{ The fragment of Post's Lattice consisting of clones which
    contain $\Clone{MPT}_0^\infty$, $\Clone{MPT}_1^\infty$, or $\Clone{AP}$. The
    interval $[\Clone{AP}, \Clone{A}]$ exhibits superpolynomial-speedup with a
    quantum algorithm. }
  \label{fig:PL_positive}
\end{figure}  

The remainder of the section is devoted to proving
Theorem~\ref{thm:AP_MPT_quantum} and Corollary~\ref{cor:classical_CD}. We
proceed by proving a series of lemmas, beginning first with two lemmas leading
up to the famous result of Simon~\cite{Simon_Alg}, which we phrase and prove in
terms of the Hidden Kernel Problem.

\begin{lem} \label{lem:simon_dual}  
Let $\alg{D} \leq \big( \ZZ_2^n \big)^2$. For all $g\in \big( \ZZ_2^n \big)^2$,
\[
  \sum_{d\in D} (-1)^{g\cdot d}
  = \begin{cases}
    |D| & \text{if } g \cdot d = 0 \text{ for all } d\in D, \\
    0   & \text{otherwise},
  \end{cases}
\]
where $g\cdot d$ is the dot product modulo 2 of $g$ and $d$ (regarded as
$\ZZ_2$-vectors).
\end{lem}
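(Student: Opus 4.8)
The plan is to recognize this as the standard character-orthogonality identity for the finite abelian group $\alg{D}$. The key observation is that $\alg{D}\leq (\ZZ_2^n)^2$ is a subgroup of the elementary abelian $2$-group $\ZZ_2^{2n}$, hence closed under the coordinatewise addition modulo $2$, and that for a fixed $g\in (\ZZ_2^n)^2$ the map $\chi_g(d) \coloneqq (-1)^{g\cdot d}$ is a group homomorphism from $\alg{D}$ into the multiplicative group $\{+1, -1\}$. This holds because $g\cdot(d+d') = g\cdot d + g\cdot d' \pmod 2$, so that $\chi_g(d+d') = \chi_g(d)\,\chi_g(d')$. With this in hand, the sum $\sum_{d\in D}(-1)^{g\cdot d}$ is just $\sum_{d\in D}\chi_g(d)$, and the two cases of the statement correspond exactly to whether the character $\chi_g$ is trivial or nontrivial on $\alg{D}$.

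First I would dispose of the trivial case. If $g\cdot d = 0$ for every $d\in D$, then every summand $(-1)^{g\cdot d}$ equals $+1$, and the sum is therefore exactly $|D|$. This yields the first branch immediately.

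For the remaining case, suppose there is some $d_0\in D$ with $g\cdot d_0 = 1$, so that $\chi_g(d_0) = -1$. Writing $S \coloneqq \sum_{d\in D}(-1)^{g\cdot d}$, I would multiply through by $\chi_g(d_0)$ and reindex using the fact that the translation $d\mapsto d_0 + d$ is a bijection of $D$ onto itself. This gives $\chi_g(d_0)\,S = \sum_{d\in D}\chi_g(d_0)\,\chi_g(d) = \sum_{d\in D}\chi_g(d_0 + d) = \sum_{d'\in D}\chi_g(d') = S$, so that $-S = S$ and therefore $2S = 0$, forcing $S = 0$. This completes the second branch.

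The argument involves no genuine obstacle; the only point requiring care is the reliance on the group structure of $\alg{D}$ — specifically that $D$ is closed under the addition inherited from $(\ZZ_2^n)^2$ — which is what guarantees both that $\chi_g$ is a well-defined homomorphism on $D$ and that the translation reindexing in the second case is a bijection. This is precisely the abelian-group orthogonality relation underlying Simon's algorithm, here applied to the subgroup $\alg{D}$ rather than to the ambient group $(\ZZ_2^n)^2$ itself.
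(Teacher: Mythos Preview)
Your proof is correct and essentially the same as the paper's: both exploit that $\chi_g(d)=(-1)^{g\cdot d}$ is a group homomorphism on the subgroup $\alg{D}$, and both reduce the nontrivial case to the observation that translation by an element $d_0$ with $g\cdot d_0=1$ interchanges the $+1$ and $-1$ summands. The only cosmetic difference is packaging: the paper partitions $D$ into $D_0=\chi_g^{-1}(1)$ and $D_1=\chi_g^{-1}(-1)$ and argues that $D_1$ is a coset of $D_0$ (hence $|D_0|=|D_1|$), whereas you use the equivalent reindexing trick $-S=S$.
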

\begin{proof}
Define sets
\[
  D_0
  = \big\{ d\in D \mid g\cdot d = 0 \big\}
  \quad\text{and}\quad
  D_1
  = \big\{ d\in D \mid g\cdot d = 1 \big\}.
\]
Observe that these sets are disjoint and $D = D_0\cup D_1$, so they partition
$D$. Thus,
\[
  \sum_{d\in D} (-1)^{g\cdot d}
  = |D_0| - |D_1|.
\]
It is not hard to see that $D_0$ is a subgroup of $\alg{D}$. If $D_1 =
\emptyset$, then the summation is equal to $|D|$ and $g\cdot d = 0$ for all $d\in
D$. Let us assume therefore that $D_1 \neq \emptyset$ and let $d\in D_1$. It
follows that $d + D_0 \subseteq D_1$ and $d + D_1 \subseteq D_0$. Hence $D_1 = d
+ D_0$ and so $D_1$ is a coset of $D_0$. Therefore $|D_1| = |D_0|$ and the
summation equals $0$.
\end{proof} 

\begin{lem} \label{lem:simon_generates} 
Let $\theta$ be a congruence of $\ZZ_2^n$ and let $X\subseteq \theta$. The set
$X$ generates $\theta$ with probability $1 - 1/\tau$, where $|X| \in \Theta(n +
\lg(\tau))$. Specifically, if $|X| = 2n$ then the probability of generation is
$1 - 2^{-n}$.
\end{lem}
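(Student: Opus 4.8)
The plan is to reduce this probabilistic statement about congruences to an elementary counting problem about random vectors spanning a $\ZZ_2$-space. First I would invoke the correspondence between congruences of the group $\ZZ_2^n$ and its subgroups: a congruence $\theta$ is determined by the subgroup $D := \{a + b \mid (a,b)\in\theta\} \leq \ZZ_2^n$ via $a\;\theta\;b \iff a+b\in D$ (recall $-b = b$ in $\ZZ_2$). The crucial structural step is to show that the congruence generated by a subset $X = \{(a_1,b_1),\dots,(a_m,b_m)\}\subseteq\theta$ equals $\theta$ if and only if the differences $d_i := a_i + b_i$ generate $D$ as a subgroup. This holds because any congruence already contains the diagonal by reflexivity, and is closed under the coordinatewise operation by compatibility with $+$; thus the congruence generated by $X$ corresponds exactly to the subgroup $\langle d_1,\dots,d_m\rangle$, and coincides with $\theta$ precisely when this subgroup is all of $D$.

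Next I would pin down the relevant distribution. Since each $(a_i,b_i)$ is drawn uniformly from $\theta$, and the map $(a,b)\mapsto a+b$ is a surjective group homomorphism $\theta\to D$ whose fibres (cosets of the diagonal) all have equal size, the differences $d_i$ are independent and uniform in $D \cong \ZZ_2^k$, where $k := \dim D \leq n$. The question therefore becomes purely linear-algebraic: what is the probability that $m = |X|$ independent uniform vectors span $\ZZ_2^k$?

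For the computational core I would use the standard estimate for random spanning sets over $\ZZ_2$. Writing the $d_i$ as the rows of an $m\times k$ matrix, they span $\ZZ_2^k$ iff the matrix has rank $k$, which occurs with probability $\prod_{i=0}^{k-1}\big(1 - 2^{i-m}\big)$. Applying $\prod(1-x_i)\geq 1 - \sum x_i$, I would bound this below by $1 - \sum_{i=0}^{k-1} 2^{i-m} > 1 - 2^{k-m} \geq 1 - 2^{n-m}$, using $k\leq n$. Requiring $2^{n-m}\leq 1/\tau$ gives $m \geq n + \lg\tau$, so $|X|\in\Theta(n+\lg\tau)$ suffices to achieve success probability $1 - 1/\tau$; specializing to $m = 2n$ gives probability at least $1 - 2^{k-2n}\geq 1 - 2^{-n}$, which is the stated special case.

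The main obstacle is the first step: establishing cleanly that generating $\theta$ \emph{as a congruence} reduces to generating only the difference subgroup $D$, whose dimension is at most $n$, rather than generating $\theta$ as a raw subgroup of $(\ZZ_2^n)^2$, whose dimension can be as large as $2n$. It is precisely the reflexivity of congruences (which supplies the $n$-dimensional diagonal for free) that caps the effective dimension at $n$, and this is what makes $\Theta(n)$ samples enough and drives the clean $|X| = 2n$ bound. Everything downstream of this reduction is a routine union-bound computation.
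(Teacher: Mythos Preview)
Your proposal is correct and follows essentially the same route as the paper: both reduce congruence generation to subgroup generation in $\ZZ_2^n$ via the map $(a,b)\mapsto a+b$, then bound the probability that uniform samples fail to span. The only cosmetic difference is that the paper bounds the failure probability directly by a union bound over the maximal subgroups of $H_\theta$ (there are at most $2^n-1$ of them, each containing a fixed sample with probability $1/2$), whereas you pass through the exact full-rank formula $\prod_{i=0}^{k-1}(1-2^{i-m})$ before applying $\prod(1-x_i)\ge 1-\sum x_i$; the resulting bound $(2^k-1)/2^m$ is identical.
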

\begin{proof}
Let $\psi$ be an arbitrary congruence of $\ZZ_2^n$. We have $g \;\psi\; h$ if
and only if $(g+h) \;\psi\; 0$, so the congruence class of $\psi$ containing
$0\in \ZZ_2^n$ uniquely characterizes $\psi$. Call this congruence class
$H_\psi$,
\[
  H_\psi
  = \big\{ g\in \ZZ_2^n \mid g \;\psi\; 0 \big\}.
\]
It's not hard to see that $H_\psi$ is a subgroup of $\ZZ_2^n$. In this manner
subgroups (normal subgroups, in general) correspond to congruences.

Let $X_0 = \{ x+y \mid (x,y)\in X \}$ and note that $X_0$ is contained in the
congruence class of $0$ in $\theta$. From the paragraph above, we have $\Cg(X)
\neq \theta$ if and only if $\Sg(X_0) \neq \alg{H}_\theta$. The set $X_0$ fails
to generate $\alg{H}_\theta$ if and only if $X_0$ is contained in some maximal
subgroup $\alg{M}$ of $\alg{H}_\theta$. Since $\alg{H}_\theta$ is abelian and a
subgroup of $\ZZ_2^n$, these are all of size $|H_\theta|/2$. Thus,
\begin{align*}
  &\Prob\big( \Cg(X) \neq \theta \big)
  = \Prob\big( \Sg(X_0) \neq \alg{H}_\theta \big) \\
  &\qquad = \Prob\big( \exists \text{ maximal } \alg{M}\leq \alg{H}_\theta \text{ with } X_0\subseteq M  \big) \\
  &\qquad \leq \sum_{\text{maximal } \alg{M}\leq \alg{H}_\theta} \Prob\big( X_0\subseteq M \big) \\
  &\qquad = \sum_{\text{maximal } \alg{M}\leq \alg{H}_\theta} \; \prod_{g\in X_0} \Prob\big( g\in M \big) \\
  &\qquad = \sum_{\text{maximal } \alg{M}\leq \alg{H}_\theta} \; \prod_{g\in X_0} \frac{1}{2}
  = \frac{L}{2^{|X_0|}},
\end{align*}
where $L$ is the number of maximal subgroups of $\alg{H}_\theta$. We now
endeavor to calculate $L$.

Since $\ZZ_2^n$ is an elementary abelian $2$-group, $\alg{H}_\theta$ is as well.
It follows that $\alg{H}_\theta$ can be regarded as a $\ZZ_2$-vector space.
Maximal subgroups correspond to maximal subspaces, and each maximal subspace can
be uniquely characterized as the orthogonal complement of a $1$-dimensional
subspace. The number of $1$-dimensional subspaces is $|H_\theta| - 1$, and so is
at most $2^n - 1$. Combining everything, we have
\[
  \Prob\big( \Cg(X) \neq \theta \big)
  \leq \frac{L}{2^{|X_0|}}
  \leq \frac{2^n - 1}{2^{|X_0|}}.
\]
It follows that $X$ generates $\theta$ with probability $1 - 1/\tau$, where
$|X| \in \Theta(n + \lg(\tau))$, as claimed.
\end{proof} 

\begin{thm}[Simon~\cite{Simon_Alg}] \label{thm:simon} 
Let $\AA = \ZZ_2^n$. The quantum circuit $\Simon_\phi(n)$ of
Definition~\ref{defn:simon_circuit} solves $\HKP(\AA)$ with probability $1 -
1/\tau$ using $\Theta(n + \lg(\tau))$ iterations.
\end{thm}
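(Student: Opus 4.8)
The plan is to trace the quantum state through $\Simon_\phi(n)$, show that a single measurement of the first register returns a \emph{uniformly} random element of the orthogonal complement $H^\perp$ of the kernel subgroup $H \coloneqq \{g\in\ZZ_2^n : (g,0)\in\ker(\phi)\}$ (the $0$-class of $\ker(\phi)$, exactly as in the proof of Lemma~\ref{lem:simon_generates}), and then invoke Lemma~\ref{lem:simon_generates} to argue that $\Theta(n+\lg(\tau))$ such samples generate $H^\perp$, from which $\ker(\phi)$ is recovered by a classical dualization. Lemma~\ref{lem:simon_dual} supplies the character-sum identity that drives the amplitude analysis, and Lemma~\ref{lem:simon_generates} supplies the sample-complexity bound, so the work is essentially to connect the circuit to these two facts.

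First I would evolve the all-zeros state $\ket{0}^{\otimes n}\otimes\ket{0}^{\otimes m}$ through the three blocks of Definition~\ref{defn:simon_circuit}: applying $H^{\otimes n}\otimes I_m$ yields the uniform superposition $2^{-n/2}\sum_{x}\ket{x}\otimes\ket{0}$, applying $\widehat{\phi}$ gives $2^{-n/2}\sum_{x}\ket{x}\otimes\ket{\phi(x)}$, and the final $H^{\otimes n}\otimes I_m$ produces, via the formula for $H^{\otimes n}$ recorded in Section~\ref{subsec:quantum_algs},
\[
  2^{-n}\sum_{a\in\{0,1\}^n}\sum_{x\in\{0,1\}^n}(-1)^{a\cdot x}\,\ket{a}\otimes\ket{\phi(x)}.
\]

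The crux is the measurement probability for the first register. Since $\phi$ is a homomorphism, every nonempty fiber $\phi^{-1}(v)$ is a coset $c_v + H$, so the amplitude of $\ket{a}\otimes\ket{v}$ factors as $2^{-n}(-1)^{a\cdot c_v}\sum_{h\in H}(-1)^{a\cdot h}$. By Lemma~\ref{lem:simon_dual} (applied to $H\leq\ZZ_2^n$) the inner sum equals $|H|$ when $a\in H^\perp$ and $0$ otherwise; hence every outcome $a$ lies in $H^\perp$, and summing the squared amplitude $(|H|/2^n)^2$ over the $2^n/|H|$ nonempty fibers gives $\Prob(a)=(2^n/|H|)\cdot(|H|/2^n)^2 = |H|/2^n = 1/|H^\perp|$. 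Thus each run returns an independent, uniformly random element of $H^\perp$. I expect this amplitude-and-coset bookkeeping --- together with the index count $|\phi(\ZZ_2^n)| = 2^n/|H|$ that makes the distribution normalize correctly --- to be the main obstacle, as everything downstream is packaged into the two preceding lemmas.

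To finish, I would apply Lemma~\ref{lem:simon_generates} to the congruence $\theta^\perp$ whose $0$-class is $H^\perp$: the measured vectors $a_1,\dots,a_k$ furnish uniform samples of $H^\perp$ (playing the role of $X_0$ in that lemma), so $k\in\Theta(n+\lg(\tau))$ of them generate $H^\perp$ with probability $1-1/\tau$, with $k=2n$ already giving $1-2^{-n}$. One classical Gaussian elimination over $\ZZ_2$ then recovers $H=(H^\perp)^\perp$, and from $H$ the congruence $\ker(\phi)$, since $(g,h)\in\ker(\phi)$ exactly when $g+h\in H$. As each iteration runs the $\Theta(n)$-size circuit once, the procedure uses $\Theta(n+\lg(\tau))$ iterations and succeeds with probability $1-1/\tau$, as claimed.
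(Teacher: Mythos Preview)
Your argument is correct and reaches the same conclusion as the paper, but the packaging differs in a way worth noting. The paper works throughout with the density matrix: it computes $\rho$, takes the partial trace over the second register, and applies Lemma~\ref{lem:simon_dual} with $D=\ker(\phi)$ regarded as a subgroup of $(\ZZ_2^n)^2$ (this is precisely why that lemma is stated for subgroups of the square). The reduced state comes out as $|\ker(\phi)^\perp|^{-1}\sum_{(a,b)\in\ker(\phi)^\perp}\ket{a}\bra{b}$, after which Lemma~\ref{lem:simon_generates} is invoked to generate $\ker(\phi)^\perp$ and one double-dualises. You instead track the state vector, descend immediately to the $0$-class $H\leq\ZZ_2^n$, make the coset structure of the fibres explicit, and apply Lemma~\ref{lem:simon_dual} one level down, to $H\leq\ZZ_2^n$ rather than to $\ker(\phi)\leq(\ZZ_2^n)^2$ (a harmless adaptation of the lemma as stated). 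Your route is the standard textbook presentation of Simon's algorithm and is more transparent about why the measured outcome is uniform on $H^\perp$; the paper's route keeps the congruence itself as the primary object, which dovetails with the $\HKP$ framing and with how the two lemmas are phrased. The two computations coincide once one notes that $\ker(\phi)^\perp=\{(a,a):a\in H^\perp\}$, so neither yields a materially shorter or stronger argument.
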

\begin{proof}
Instead of calculating the pre-measurement output (call it $\ket{\psi}$), it
will be more convenient for us to calculate the density matrix of the
pre-measurement output, $\rho = \ket{\psi}\bra{\psi}$, and take the partial
trace.

The circuit operates on two registers, initially set to $\ket{\psi_0} = \ket{0^n}
\otimes \ket{0^m}$. This state has density matrix $\rho_0 =
\ket{\psi_0}\bra{\psi_0}$. After the first set of Hadamard gates, the density
matrix is
\begin{align*}
  \rho_1
  &\coloneqq (H^{\otimes n}\otimes I) \rho_0 (H^{\otimes n} \otimes I) \\
  &= \big( H^{\otimes n}\ket{0^n} \otimes \ket{0^m} \big)
    \big( \bra{0^n}H^{\otimes n} \otimes \bra{0^m} \big) \\
  &= \Big( \frac{1}{2^{n/2}} \sum_{x\in \{0,1\}^n} \ket{x} \otimes \ket{0^m} \Big)
    \Big( \frac{1}{2^{n/2}} \sum_{y\in \{0,1\}^n} \bra{y} \otimes \bra{0^m} \Big) \\
  &= \frac{1}{2^n} \sum_{x,y\in \{0,1\}^n} \ket{x}\bra{y} \otimes \ket{0^m}\bra{0^m}.
\end{align*}

We now describe the $\widehat{\phi}$ gate. A homomorphism $\phi: \AA\to \alg{B}$
is given as input as an oracle from which we may make queries. As a function,
$\phi$ can be regarded as mapping bit strings from $\{0,1\}^n$ to $\{0,1\}^m$.
Define the function $\widehat{\phi}$ by
\[ \begin{array}{rcl}
    \widehat{\phi} : \{0,1\}^{n+m} & \to     & \{0,1\}^{n+m}, \\
                            (a,b)  & \mapsto & (a, \phi(a)+b).
\end{array} \]
Observe that $\widehat{\phi}$ is invertible and that the original value of
$\phi(a)$ can be recovered --- $\widehat{\phi}(a,0) = (a,\phi(a))$. The function
$\widehat{\phi}$ can be regarded as a permutation of basis vectors of the space
$\B^{\otimes (n+m)}$, and thus can be extended to a unitary transformation
$\widehat{\phi}$ on the space.

Continuing with the evaluation of the circuit, after applying the
$\widehat{\phi}$ gate the density matrix is
\begin{align*}
  \rho_2
  &\coloneqq \widehat{\phi} \rho_1 \widehat{\phi}^{\dagger}
  = \frac{1}{2^n} \sum_{x,y\in \{0,1\}^n} 
    \big( \widehat{\phi}\ket{x} \otimes \ket{0^m} \big)
    \ \big( \widehat{\phi}\ket{y} \otimes \ket{0^m} \big)^{\dagger} \\
  & = \frac{1}{2^n} \sum_{x,y\in \{0,1\}^n} 
    \big( \ket{x} \otimes \ket{\phi(x)} \big)
    \ \big( \ket{y} \otimes \ket{\phi(y)} \big)^{\dagger} \\
  & = \frac{1}{2^n} \sum_{x,y\in \{0,1\}^n} 
    \ket{x}\bra{y} \otimes \ket{\phi(x)}\bra{\phi(y)}.
\end{align*}

Applying the last set of Hadamard gates yields the final pre-measurement density
matrix,
\begin{align*}
  \rho
  &\coloneqq \frac{1}{2^n} \sum_{x,y\in \{0,1\}^n} H^{\otimes n}\ket{x}\bra{y} H^{\otimes n} \otimes \ket{\phi(x)}\bra{\phi(y)} \\
  &= \frac{1}{2^{2n}} \sum_{\stack{a,b\in \{0,1\}^n,}{x,y\in \{0,1\}^n}}
    (-1)^{a\cdot x + b\cdot y} \ket{a}\bra{b} \otimes \ket{\phi(x)}\bra{\phi(y)}.
\end{align*}

The density matrix for the first register is given by taking the partial trace
of $\rho$ over the second register. Doing this, we have
\[
  \Tr_2(\rho)
  = \frac{1}{2^{2n}} \sum_{\stack{a,b\in \{0,1\}^n,}{x,y\in \{0,1\}^n}}
    (-1)^{a\cdot x + b\cdot y} \ket{a}\bra{b} \; \Tr\big( \ket{\phi(x)}\bra{\phi(y)} \big).
\]
It isn't hard to see that $\Tr(\ket{\phi(x)}\bra{\phi(y)}) = 1$ if $\phi(x) =
\phi(y)$ (i.e.\ if $(x,y)\in \ker(\phi)$) and is $0$ otherwise. Thus,
\[
  \Tr_2(\rho)
  = \frac{1}{2^{2n}} \sum_{a,b\in \{0,1\}^n} \ \sum_{(x,y)\in \ker(\phi)}
    (-1)^{a\cdot x + b\cdot y} \ket{a}\bra{b}.
\]

The congruence $\ker(\phi)$ is a subgroup of $(\ZZ_2^n)^2$ and we have $a\cdot x
+ b\cdot y = (a,b)\cdot (x,y)$. Lemma~\ref{lem:simon_dual} thus applies to the
inner summation. Applying it yields
\[
  \Tr_2(\rho)
  = \frac{|\ker(\phi)|}{2^{2n}} \sum_{(a,b)\in \ker(\phi)^\perp} \ket{a}\bra{b},
\]
where
\[
  \ker(\phi)^\perp 
  = \big\{ (a,b) \mid (a,b)\cdot (x,y) = 0 \text{ for all } (x,y)\in \ker(\phi) \big\}.
\]
The congruence $\ker(\phi)$ is a $\ZZ_2$-subspace of $\AA^2$, and this is its
orthogonal complement. It follows that $\ker(\phi)^\perp$ is also a congruence
of $\AA$ and $|\ker(\phi)||\ker(\phi)^\perp| = |A|^2 = 2^{2n}$. Thus
\[
  \Tr_2(\rho)
  = \frac{1}{|\ker(\phi)^\perp|} \sum_{(a,b)\in \ker(\phi)^\perp} \ket{a}\bra{b}.
\]
This corresponds to a uniform distribution over elements in $\ker(\phi)^\perp$.
By Lemma~\ref{lem:simon_generates}, iterating this procedure $\Theta(n +
\lg(\tau))$ times will produce a generating set for $\ker(\phi)^\perp$ with
probability $1 - 1/\tau$. 

As $\ker(\phi) = \big( \ker(\phi)^\perp \big)^\perp$, we have succeeded in
calculating $\ker(\phi)$ with probability $1 - 1/\tau$ using $\Theta(n +
\lg(\tau))$ iterations of a circuit of size $\Theta(n)$.
\end{proof} 

At this point, we have established a quantum polynomial-time algorithm for the
$\HKP$ for powers of algebras above the clone $\Clone{AP}_0$ in Post's Lattice.
Aside from the group $\ZZ_2$, this includes only 3 other structures (see
Figure~\ref{fig:PL_positive}). The next lemma and theorem greatly expand this
list by deducing a classical algorithm for powers of certain congruence
distributive algebras which includes the infinite ``wings'' of Post's Lattice.

\begin{lem} \label{lem:classical_CD_projs}  
Let $\alg{B}_1, \dots, \alg{B}_n$ be similar algebras which are simple and let
$\AA = \prod_{i=1}^n \alg{B}_i$ be congruence distributive. Every congruence of
$\AA$ is the kernel of a projection homomorphism $\proj_I: \AA\to \prod_{i\in I}
\alg{B}_i$ for some $I\subseteq [n]$.
\end{lem}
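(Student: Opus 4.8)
The plan is to exploit the well-known structure theory for congruences of finite products of simple algebras in a congruence distributive variety. The key fact is that when $\AA = \prod_{i=1}^n \alg{B}_i$ with each $\alg{B}_i$ simple and $\AA$ congruence distributive, the congruence lattice $\Con(\AA)$ is as small as possible, namely isomorphic to the Boolean lattice $2^{[n]}$ of subsets of $[n]$. Each congruence should then correspond exactly to a choice of which factors to ``collapse'' versus ``keep'', and the kernel of the projection $\proj_I$ onto $\prod_{i\in I}\alg{B}_i$ is precisely the congruence that is trivial ($\Cong0$) on the coordinates in $I$ and universal ($\Cong1$) on the coordinates outside $I$.

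First I would set up the candidate congruences explicitly. For each $I\subseteq [n]$, define $\theta_I = \ker(\proj_I)$, so that $(a,b)\in\theta_I$ iff $a_i = b_i$ for all $i\in I$; equivalently $\theta_I = \bigwedge_{i\in I}\eta_i$, where $\eta_i = \ker(\proj_{\{i\}})$ is the kernel of the $i$-th coordinate projection. Since each $\alg{B}_i$ is simple, the lattice interval above $\eta_i$ is trivial, so $\eta_i$ is a coatom of $\Con(\AA)$, and $\eta_i \vee \eta_j = \Cong1$ for $i\neq j$ (the join of any two distinct coatoms must be the top, as their meet already drops below each). The goal is to show that every congruence of $\AA$ equals $\theta_I$ for some $I$. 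Because these $\theta_I$ are exactly the kernels of the projections, proving this claim establishes the lemma.

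The main argument is to show that every congruence $\theta$ of $\AA$ is a meet of some of the coatoms $\eta_i$. I would use the standard fact that in a congruence distributive algebra, every congruence is the meet of the completely meet-irreducible congruences above it, together with the observation that in this setting the only completely meet-irreducible congruences are the $\eta_i$. Concretely: since $\alg{B}_i$ is simple, $\eta_i$ is meet-irreducible. Conversely, by congruence distributivity one can invoke a lemma of the J\'onsson/Birkhoff type stating that a completely meet-irreducible congruence of a finite product must be comparable to (indeed, lie above) exactly one of the factor kernels $\eta_i$; distributivity rules out any ``diagonal'' congruences that identify factors with each other. Setting $I = \{ i : \theta \le \eta_i\}$, I would verify that $\theta = \bigwedge_{i\in I}\eta_i = \theta_I$ by showing both inclusions, using that any completely meet-irreducible congruence $\mu \ge \theta$ equals some $\eta_j$ with $j\in I$.

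The hard part will be justifying that the only completely meet-irreducible congruences of $\AA$ are the $\eta_i$ — that is, ruling out congruences which nontrivially identify two or more simple factors with each other. For groups and rings this is automatic, but in a general congruence distributive variety one needs the distributivity hypothesis essentially: it is precisely distributivity that forbids ``skew'' or ``diagonal'' congruences between isomorphic simple factors (for instance, the diagonal congruence of $\alg{B}\times\alg{B}$ when $\alg{B}$ is a simple non-distributive algebra). I would therefore isolate this as the crux and handle it via the characterization that in a congruence distributive setting $\Con(\prod \alg{B}_i) \cong \prod \Con(\alg{B}_i)$, so that $\Con(\AA)\cong \prod_{i=1}^n \mathbf{2} \cong 2^{[n]}$, from which the enumeration of congruences as exactly $\{\theta_I : I\subseteq[n]\}$, and hence as projection kernels, follows immediately.
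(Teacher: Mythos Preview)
Your approach is correct but considerably heavier than the paper's. You propose to classify the completely meet-irreducible congruences of $\AA$ (or equivalently to invoke $\Con\bigl(\prod_i \alg{B}_i\bigr) \cong \prod_i \Con(\alg{B}_i)$ in the congruence distributive setting), and you flag this classification as ``the hard part''. The paper shows that this hard part is entirely unnecessary: with $\eta_i = \ker(\proj_i)$ one has $\bigwedge_{i\in[n]} \eta_i = \Cong0$, each $\eta_i$ is maximal in $\Con(\AA)$ because $\alg{B}_i$ is simple, and a single application of the distributive law gives
\[
  \theta
  = \theta \vee \Cong0
  = \theta \vee \bigwedge_{i\in[n]} \eta_i
  = \bigwedge_{i\in[n]} (\theta \vee \eta_i)
  = \bigwedge_{i\in I} \eta_i
  = \ker(\proj_I),
\]
where $I = \{\, i \mid \theta \le \eta_i \,\}$, since maximality forces each $\theta \vee \eta_i$ to be either $\eta_i$ or $\Cong1$. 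Your route buys a fuller structural picture of $\Con(\AA)$ (as the Boolean lattice $2^{[n]}$), but for the lemma as stated the paper's three-line computation is more economical and avoids appealing to any auxiliary classification of meet-irreducibles or to Fraser--Horn--type results.
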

\begin{proof}
For each $i\in [n]$ define $\eta_i = \ker(\proj_i)$. Let $\theta$ be a
congruence of $\AA$. Each algebra $\alg{B}_i$ is simple, so each congruence
$\eta_i$ is maximal in $\Con(\AA)$. It follows that
\[
  \theta \vee \eta_i = \begin{cases}
    \eta_i & \text{if } \theta\leq \eta_i, \\
    \Cong1 & \text{otherwise}.
  \end{cases}
\]
Using congruence distributivity, we now have
\[
  \theta
  = \theta \vee \Cong0
  = \theta \vee \bigwedge_{i\in [n]} \eta_i
  = \bigwedge_{i\in [n]} \big( \theta \vee \eta_i \big)
  = \bigwedge_{i\in I} \eta_i
  = \ker(\proj_I),
\]
where $I = \{ i \mid \theta\leq \eta_i \}$.
\end{proof} 

\begin{thm} \label{thm:classical_CD_alg} 
Let $\alg{B}_1, \dots, \alg{B}_n$ be similar algebras which are simple and let
$\AA = \prod_{i=1}^n \alg{B}_i$ be congruence distributive. There is a classical
$\cl{O}(n)$ algorithm solving $\HKP(\AA)$.
\end{thm}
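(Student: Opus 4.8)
The plan is to use Lemma~\ref{lem:classical_CD_projs} to collapse the search space from all congruences of $\AA$ down to the $n$-bit data of a single subset $I\subseteq[n]$, and then to recover $I$ with one oracle query per coordinate. By that lemma the hidden congruence is $\ker(\phi)=\ker(\proj_I)$, where $I=\{\,i\mid\ker(\phi)\le\eta_i\,\}$ and $\eta_i=\ker(\proj_i)$. Since $\ker(\proj_I)=\bigwedge_{i\in I}\eta_i$ is determined by $I$ alone — concretely, $(a,b)\in\ker(\proj_I)$ iff $a_i=b_i$ for every $i\in I$ — the entire task reduces to computing the set $I$; returning $I$ (equivalently, generators for $\bigwedge_{i\in I}\eta_i$) then specifies $\ker(\phi)$.

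To compute $I$, the key point is that each coordinate's membership can be decided by a single local perturbation. First I would fix an arbitrary reference tuple $r\in A$ and query $\phi(r)$. For each $i\in[n]$ I would form the tuple $r^{(i)}$ that agrees with $r$ off coordinate $i$ and takes some value $c\neq r_i$ there (possible because each simple $\alg{B}_i$ has at least two elements), and query $\phi(r^{(i)})$. Since $\ker(\phi)=\ker(\proj_I)$, two tuples have equal $\phi$-image exactly when they agree on every coordinate of $I$; as $r$ and $r^{(i)}$ agree off $i$ and differ at $i$, they agree on all of $I$ precisely when $i\notin I$. Hence $\phi(r^{(i)})=\phi(r)$ iff $i\notin I$, and the test ``$\phi(r^{(i)})\neq\phi(r)$'' decides membership of $i$ in $I$ correctly.

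Assembling this, the algorithm queries $\phi$ at the $n+1$ points $r,r^{(1)},\dots,r^{(n)}$, sets $I=\{\,i\mid\phi(r^{(i)})\neq\phi(r)\,\}$, and outputs $\ker(\proj_I)$; correctness is immediate from the two paragraphs above, and the procedure makes $\O(n)$ oracle queries with $\O(n)$ further bookkeeping. I do not expect a substantive obstacle, since all the real content is already carried by Lemma~\ref{lem:classical_CD_projs}: it guarantees the hidden congruence is one of the $2^n$ projection kernels, so it is pinned down by the $n$ independent bits ``is coordinate $i$ essential for $\phi$?''. The one point needing care is the cost model — a single image $\phi(r)$ may take $\Theta(n)$ bits to store and compare — so the $\O(n)$ bound is to be read, as for the iteration counts of the quantum results above, under the convention that an oracle call together with an equality test of its output is charged as a single step (and the factors $\alg{B}_i$ are of bounded size, as in the intended application).
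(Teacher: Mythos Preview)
Your proposal is correct and matches the paper's own proof essentially line for line: both invoke Lemma~\ref{lem:classical_CD_projs} to reduce to determining the index set $I$, then recover $I$ by fixing a base tuple and, for each $i$, perturbing the $i$-th coordinate and comparing $\phi$-values. Your added remarks on the cost model and on why $|B_i|\geq 2$ are a bit more explicit than the paper's (which simply drops trivial factors without loss of generality), but the argument is the same.
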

\begin{proof}
Let $\phi:\AA\to \alg{D}$ be a homomorphism. Lemma~\ref{lem:classical_CD_projs}
implies that $\ker(\phi) = \ker(\proj_I)$ for some $I\subseteq [n]$. Thus,
specifying $I$ uniquely determines $\ker(\phi)$. Without loss of generality, we
may assume that $|B_i| \geq 2$ since we can eliminate any $\alg{B}_i$ of size
$1$ in the product and produce an algebra isomorphic to $\AA$. For each $i$, fix
distinct elements $a_i, b_i\in B_i$. We have that
\[
  i \not\in I
  \qquad \Leftrightarrow \qquad
  \phi(a_1, \dots, a_n) = \phi(a_1, \dots, b_i, \dots, a_n),
\]
where the inputs to $\phi$ differ only at position $i$. Starting at coordinate
$i=1$ and proceeding to coordinate $i=n$, we evaluate both sides of the
equality, recording when it holds and when it fails. The set of $i$ for which
the equality fails is $I$. This is an $\cl{O}(n)$ procedure.
\end{proof} 

We are now ready to prove Corollary~\ref{cor:classical_CD}, which we began the
section with.

\begin{customcor}{\ref{cor:classical_CD}}[redux]  
Let $\alg{B}$ be a 2-element algebra such that one of
\begin{enumerate}
  \item $\Clone{MPT}_0^{\infty} \preceq \alg{B}$,
  \item $\Clone{MPT}_1^{\infty} \preceq \alg{B}$, or
  \item $\Clone{DM} \preceq \alg{B}$
\end{enumerate}
holds and let $\AA = \alg{B}^n$. Then $\HKP(\AA)$ is solvable classically in
$\O(n)$ time.
\end{customcor}
\begin{proof}
By Theorem~\ref{thm:classical_CD_alg}, it is sufficient to show that $\AA$ is
congruence distributive. The primary tool that we use is the existence of
J\'{o}nsson term operations~\cite{Jonsson_CDTerms}, which is equivalent to $\AA$
being congruence distributive. Briefly, these are a collection of 3-ary term
operations of $\AA$, $J_1$, $\dots$, $J_{2m+1}$, satisfying the universally
quantified identities
\[ \begin{aligned}
  & J_1(x,x,y) = x, \quad\qquad J_{2m+1}(x,y,y) = y, \\
  & \begin{aligned}
    & J_i(x,y,x) = x,                   & \quad & \text{for } i\in [2m+1], \\
    & J_{2i+1}(x,y,y) = J_{2i+2}(x,y,y) & \quad & \text{for } i\in [m-1], \\
    & J_{2i}(x,x,y) = J_{2i+1}(x,x,y)   & \quad & \text{for } i\in [m].
  \end{aligned}
\end{aligned} \]
We are now ready to proceed with the proof.

Suppose that $\Clone{DM} \preceq \alg{B}$ and let $\AA = \alg{B}^n$. The clone
$\Clone{DM}$ has $\maj(x,y,z)$ as a generating operation (see the discussion at
the end of Section~\ref{sec:HKP} and Figure~\ref{fig:posts_lattice}), and so
this is also an operation of $\alg{B}$ and hence of $\AA$ (where it is extended
from $\alg{B}$ componentwise). Simply taking $J_1(x,y,z) \coloneqq \maj(x,y,z)$
and verifying that this choice satisfies the J\'{o}nsson identities is
sufficient to prove that $\AA$ is congruence distributive.

The arguments for $\Clone{MPT}_0^\infty$ and $\Clone{MPT}_1^\infty$ are quite
similar, so we will present only the $\Clone{MPT}_0^\infty \preceq \alg{B}$
case. Suppose that $\Clone{MPT}_0^\infty \preceq \alg{B}$ and let $\AA =
\alg{B}^n$. Recall that $\Clone{MPT}_0^\infty$ has $x \wedge (y\vee z)$ as a
generating operation, and so this is an operation of $\alg{A}$. Observe that $x
\wedge y = x \wedge (y\vee y)$, so $\AA$ also has $x\wedge y$ as an operation.
Define term operations of $\AA$,
\begin{align*}
  J_1(x,y,z) &\coloneqq x \wedge (y \vee z),
    & J_2(x,y,z) &\coloneqq x \wedge z, \\
  J_3(x,y,z) &\coloneqq z \wedge (x \vee y).
\end{align*}
It is an easy exercise to show that these terms satisfy the identities above and
are hence J\'{o}nsson terms. It follows that $\AA$ is congruence distributive.
\end{proof} 

Returning again to the fragment of Post's Lattice in
Figure~\ref{fig:PL_positive}, we are left to analyze the central diamond of
clones between $\Clone{AP}$ and $\Clone{A}$. The next lemma allows us to apply
Theorem~\ref{thm:simon} to these clones.

\begin{lem} \label{lem:AP_cong} 
Let $\alg{B}$ be a 2-element algebra such that $\Clone{AP} = \alg{B}$  and let
$\AA = \alg{B}^n$. If $\theta$ is a congruence of $\AA$ then $\theta$ is also a
congruence of the group $\ZZ_2^n$.
\end{lem}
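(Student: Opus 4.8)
The plan is to make the identity of the algebra explicit and then recover the binary group operation from the single ternary term. Since $\Clone{AP}$ is generated by $x+y+z \pmod 2$, the hypothesis $\Clone{AP} = \alg{B}$ lets me take $\alg{B} = \left< \{0,1\} \;;\; p\right>$ with $p(x,y,z) = x+y+z$, so that $\AA = \alg{B}^n$ carries the coordinatewise operation $p(x,y,z) = x+y+z$ on $\{0,1\}^n$. On the other side, I would first unwind what it means to be a congruence of the group $\ZZ_2^n = \left< \{0,1\}^n \;;\; +, 0\right>$: concretely, it is an equivalence relation on $\{0,1\}^n$ compatible with the binary operation $+$, since the nullary constant $0$ imposes no compatibility condition and the group inverse is the identity on $\ZZ_2^n$. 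Thus the entire lemma reduces to showing that an arbitrary congruence $\theta$ of $\AA$ respects $+$.

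The key step is the observation that $x + y = p(x,y,0)$, where $0\in\{0,1\}^n$ is the zero vector; that is, the group operation is obtained from the affine term $p$ by freezing its third argument at $0$. Given $a \;\theta\; b$ and $c \;\theta\; d$, I combine these with the reflexivity instance $0 \;\theta\; 0$ and apply compatibility of $\theta$ with the term operation $p$ of $\AA$ to obtain $p(a,c,0) \;\theta\; p(b,d,0)$, i.e.\ $a + c \;\theta\; b + d$. This is exactly compatibility of $\theta$ with $+$, so $\theta$ is a congruence of $\ZZ_2^n$, completing the argument.

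The point worth flagging is that this does \emph{not} follow from the clone-containment Observation~\ref{obs:clones_and_hkp}: because $\Clone{AP} \subsetneq \Clone{AP}_0 = \Clo(\ZZ_2)$, that observation runs the wrong way and only yields $\Con(\ZZ_2^n) \subseteq \Con(\alg{B}^n)$. Indeed $+$ and the constant $0$ are not term operations of $\alg{B}$ at all, since $\Clone{AP}$ consists of idempotent operations and hence contains no constants. The content of the lemma is therefore that, although $\alg{B}$ has strictly fewer term operations than the group, relational compatibility with $+$ is already forced: substituting the fixed element $0$ into the idempotent Maltsev term $p$ recovers $+$ at the level of the relation $\theta$, which is all a congruence condition requires. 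There is no genuine computational obstacle here; the only thing to get right is that the frozen coordinate must be exactly $0$, since $p(x,y,e) = x+y+e$ equals $x+y$ precisely when $e = 0$.
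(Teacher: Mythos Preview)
Your proposal is correct and takes essentially the same approach as the paper: both arguments freeze the third coordinate of the ternary term $x+y+z$ at $0$ (using the reflexive pair $(0,0)\in\theta$) to force compatibility of $\theta$ with binary $+$. Your additional remarks explaining why Observation~\ref{obs:clones_and_hkp} does not apply directly, and why the substituted constant must be $0$, are accurate and add useful context not spelled out in the paper.
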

\begin{proof}
The clone $\Clone{AP}$ has ternary addition, $x+y+z$, as its generating
operation (see Figure~\ref{fig:posts_lattice}). In order to show that $\theta$
is a congruence of $\ZZ_2^n$, it is sufficient to show that the set $\theta$ is
closed under binary addition. Let $(a,a'), (b,b')\in \theta$. We have
\[
  a + b
  = (a + b + 0)
  \;\theta\;
  (a' + b' + 0)
  = a' + b',
\]
and hence $(a+a', b+b') \in \theta$, as desired.
\end{proof} 

We are now ready to prove Theorem~\ref{thm:AP_MPT_quantum}, which we started the
section with. The proof splits into two portions, depending on whether the
algebra in question has clone contained in one of the infinite ``wings'' or
whether it is above the bottom of the central diamond.

\begin{customthm}{\ref{thm:AP_MPT_quantum}}[redux] 
Let $\alg{B}$ be a 2-element algebra such that one of
\begin{enumerate}
  \item $\Clone{MPT}_0^{\infty} \preceq \alg{B}$,
  \item $\Clone{MPT}_1^{\infty} \preceq \alg{B}$, or
  \item $\Clone{AP} \preceq \alg{B}$
\end{enumerate}
holds and let $\AA = \alg{B}^n$. The quantum circuit $\Simon_\phi(n)$ of
Definition~\ref{defn:simon_circuit} solves $\HKP(\AA)$ with probability $1 -
1/\tau$ in $\Theta(n + \lg(\tau))$ iterations.
\end{customthm}
\begin{proof}
Consider an instance of the $\HKP(\AA)$ with homomorphism $\phi: \AA \to
\alg{X}$. If $\Clone{MPT}_0^{\infty} \preceq \alg{B}$ or $\Clone{MPT}_1^{\infty}
\preceq \alg{B}$, then $\alg{B}$ is congruence distributive. By
Lemma~\ref{lem:classical_CD_projs}, $\ker(\phi)$ is therefore the kernel of a
projection. All such kernels are also congruences of $\ZZ_2^n$. If, on the other
hand, $\Clone{AP} \preceq \alg{B}$, then by Lemma~\ref{lem:AP_cong} and
Observation~\ref{obs:clones_and_hkp}, we have that the congruence $\ker(\phi)$
of $\AA$ is also a congruence of the group $\ZZ_2^n$.

Theorem~\ref{thm:simon} therefore applies in all cases, so $\ker(\phi)$ can be
calculated with probability $1 - 1/\tau$ in $\Theta(n + \lg(\tau))$ iterations
of the circuit.
\end{proof} 

\section{The HKP and Post's Lattice: Hardness} \label{sec:hkp_negative}  
The previous section presented quantum and classical algorithms for the $\HKP$
on the infinite upper portion of Post's Lattice. In this section, we establish
quantum and classical hardness results for the remaining clones. The main
classical result is Theorem~\ref{thm:no_classical} and the main quantum result
is Theorem~\ref{thm:no_QA}, stated below.

\begin{thm} \label{thm:no_classical}  
Let $\alg{B}$ be a 2-element algebra such that one of
\begin{enumerate}
  \item $\alg{B} \preceq \CloneWedge$,
  \item $\alg{B} \preceq \CloneVee$, or
  \item $\alg{B} \preceq \Clone{A}$
\end{enumerate}
holds, and let $\AA = \alg{B}^n$. Any classical algorithm which solves
$\HKP(\AA)$ must make $\Omega(2^{n/2})$ queries to the oracle.
\end{thm}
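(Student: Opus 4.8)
The plan is to first collapse the three cases to three specific algebras and then attack each by the adversary method. Since $\alg{B}\preceq\CloneWedge$ means $\Clo(\alg{B})\subseteq\CloneWedge=\Clo(\alg{W})$, where $\alg{W}$ is the $2$-element algebra with $\Clo(\alg{W})=\CloneWedge$, Observation~\ref{obs:clones_and_hkp} says every instance of $\HKP(\alg{W}^n)$ is an instance of $\HKP(\alg{B}^n)$, so a query lower bound for $\HKP(\alg{W}^n)$ transfers verbatim to $\HKP(\alg{B}^n)$ (likewise for $\CloneVee,\Clone{A}$). Thus it suffices to prove $\Omega(2^{n/2})$ for $\alg{W}^n$, $\alg{V}^n$, $\alg{E}^n$, the $n$-th powers of the $2$-element meet-semilattice, join-semilattice, and affine algebra (with $\Clo(\alg{E})=\Clone{A}$). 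The meet and join cases are exchanged by the order-reversing automorphism of the $2$-element domain, so I would prove the $\CloneWedge$ case and obtain $\CloneVee$ by duality. In each case the scheme is the same: exhibit a family of homomorphisms with pairwise distinct kernels whose oracles agree except on a rare set of inputs, so that an algorithm making too few queries cannot determine which kernel it faces.

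For the affine algebra $\alg{E}^n$ (underlying group $\ZZ_2^n$) I would reuse Simon's construction. For each $s\in\ZZ_2^n\setminus\{0\}$ the quotient map $\phi_s\colon\ZZ_2^n\to\ZZ_2^n/\langle s\rangle$ is a homomorphism of $\alg{E}^n$ (it preserves $x+y+z$, $x\Aiff y$, and $0$), it is $2$-to-$1$ with $\phi_s(x)=\phi_s(y)\iff x+y=s$, and distinct $s$ give distinct kernels while all codomains are isomorphic. An adversary answering every query with a fresh distinct value (pretending $\phi$ is injective) is consistent with $\phi_s$ for every $s$ not equal to the sum of two already-queried points; after $q$ queries only $O(q^2)$ values of $s$ are ruled out, so while $\binom{q}{2}<2^n-2$ at least two candidate kernels survive and the algorithm must err on one of them, forcing $q=\Omega(2^{n/2})$. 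This is exactly the classical lower bound accompanying Simon's algorithm~\cite{Simon_Alg}.

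The semilattice case is the crux. Identify $\alg{W}^n$ with $(2^{[n]},\cap,\emptyset,[n])$. The non-distributivity that blocks Lemma~\ref{lem:classical_CD_projs} and Theorem~\ref{thm:classical_CD_alg} supplies the needed non-projection congruences: if $A\mathrel{\theta}C$ with $C\subsetneq A$ and $i\in A\setminus C$, then $\{i\}=A\cap\{i\}\mathrel{\theta}C\cap\{i\}=\emptyset$, so for any nontrivial $\theta$ the class of $\emptyset$ is a nonempty, meet-closed down-set (ideal), and collapsing such an ideal to a single class (leaving all other sets as singleton classes) is itself a congruence. I would build the hard family from ideals that differ in exactly one middle element: for each $A^*\in\binom{[n]}{n/2}$ let $\theta_{A^*}$ collapse the ideal $\{B:|B|<n/2\}\cup\{A^*\}$ and keep every other set a singleton class. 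Each $\theta_{A^*}$ is a congruence, distinct $A^*$ give distinct kernels, the quotients are pairwise isomorphic via coordinate permutations, and — the key point — the oracles $\phi_{A^*}$ and $\phi_{A^{**}}$ agree on every input outside $\{A^*,A^{**}\}$ (small sets go to the bottom class, every other set to its own label). Hence a query of a middle set $B$ reveals only whether $B=A^*$, and queries of small or large sets reveal nothing about $A^*$; determining $A^*$ is a search for one marked element among $\binom{n}{n/2}$ candidates, and an adversary answering ``unmarked'' forces $\Omega\!\big(\binom{n}{n/2}\big)=\Omega(2^{n}/\sqrt{n})\ge\Omega(2^{n/2})$ queries.

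The main obstacle is the semilattice construction: I must check carefully that the chosen ideal is meet-closed so that $\theta_{A^*}$ really is a congruence compatible with $\cap$, and that the meet structure of the oracle answers leaks no information about $A^*$ — it is precisely the one-element difference between the ideals that makes the adversary's consistency argument go through. I would also confirm that supplying the codomain as part of the HKP instance gives the algorithm no extra power: all $\phi_{A^*}$ (respectively all $\phi_s$) share a single isomorphism type of codomain, so the standard black-box adversary applies. Finally, to cover randomized algorithms I would invoke Yao's principle, placing the uniform distribution on the hidden object $A^*$ (respectively $s$) and bounding the success probability of any algorithm making $o(2^{n/2})$ queries.
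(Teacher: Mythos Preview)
Your reduction via Observation~\ref{obs:clones_and_hkp} and your treatment of the affine case are exactly what the paper does: Lemma~\ref{lem:A_no_classical} is precisely Simon's adversary argument for $\theta_z=\Cg(0^n,z)$, and the paper invokes the same observation to pass from the three maximal clones down to arbitrary $\alg{B}$.

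The semilattice case, however, is handled by a genuinely different mechanism. The paper (Lemma~\ref{lem:vee_wedge_no_P_alg}) takes the full middle layer $M$, lets $Z\subseteq M$ range over \emph{arbitrary} subsets, and uses a Maltsev-chain analysis of congruence generation to prove that any generating set for $\theta_Z=\Cg(Z^2)$ has size at least $|Z|$; the lower bound then comes from the sheer size of the output that the algorithm must produce. Your route is an adversary/search argument: you pick a \emph{single} middle-level set $A^\ast$, collapse the ideal $\{B:|B|<n/2\}\cup\{A^\ast\}$, and observe that the resulting oracles for distinct $A^\ast$ agree everywhere except on the marked element, reducing the problem to unstructured search over $\binom{n}{n/2}$ items. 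Your construction is correct (the set is downward closed, hence a congruence of $\langle\{0,1\}^n;\wedge,0,1\rangle$), and it yields the bound more directly and with less algebraic machinery --- no Maltsev chains are needed. The paper's approach, by contrast, exposes structural information about congruence generation in semilattices and yields a bound tied to output complexity rather than distinguishability; it also transfers to the quantum setting without loss, whereas your search reduction would pick up a square root there. For Theorem~\ref{thm:no_classical} as stated, either argument suffices.
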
 

\begin{thm} \label{thm:no_QA} 
Let $\alg{B}$ be a 2-element algebra such that one of
\begin{enumerate}
  \item $\alg{B} \preceq \CloneWedge$,
  \item $\alg{B} \preceq \CloneVee$, or
  \item $\alg{B} \preceq \Clone{U}$
\end{enumerate}
holds, and let $\AA = \alg{B}^n$. Any quantum algorithm which solves $\HKP(\AA)$
must make $\Omega((1 + \epsilon)^n)$ queries to the oracle for $0 < \epsilon <
1$.
\end{thm}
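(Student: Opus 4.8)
The plan is to embed unstructured search into $\HKP(\AA)$ and invoke the Bennett--Bernstein--Brassard--Vazirani (BBBV) lower bound: any bounded-error quantum algorithm that finds a single marked element among $M$ candidates requires $\Omega(\sqrt{M})$ queries. With a hidden element ranging over a set of size $M = 2^{\Theta(n)}$, this yields $\Omega(2^{n/2}/\mathrm{poly}(n)) = \Omega((1+\epsilon)^n)$ for any fixed $\epsilon < \sqrt2-1$, which lies in $(0,1)$. First I would reduce the three cases to the top clone of each interval. By Observation~\ref{obs:clones_and_hkp}, if $\Clo(\alg B)\subseteq\Clo(\alg D)$ then any solver for $\HKP(\alg B^n)$ also solves $\HKP(\alg D^n)$; contrapositively, a query lower bound for the larger clone $\alg D$ descends to every $\alg B$ below it. So it suffices to prove the bound when $\Clo(\alg B)$ equals $\CloneWedge$, $\CloneVee$, or $\Clone{U}$; and since $0\leftrightarrow1$ is an isomorphism carrying $\CloneWedge$ to $\CloneVee$ and preserving $\HKP$, only $\CloneWedge$ and $\Clone{U}$ require separate arguments.

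Next I would exhibit, for each top clone, a family $\{\theta_a\}$ of congruences of $\AA=\alg B^n$ indexed by a hidden element $a$ drawn from a set $S$ of size $2^{\Theta(n)}$. For $\CloneWedge$, $\AA=\langle\{0,1\}^n;\wedge,0^n,1^n\rangle$ is the Boolean lattice under intersection, and the key observation is that for any down-set (order ideal) $J$, the relation collapsing $J$ to a single class and fixing every other element as a singleton is a meet-congruence, since $x\wedge c$ remains in $J$ whenever $x\in J$. Taking $J_a=\{v:|v|<n/2\}\cup\{a\}$ for a hidden middle-layer element $a$ of weight $n/2$ gives a congruence $\theta_a$, with distinct $a$ giving distinct $\theta_a$; here $S$ is the middle layer, of size $\binom{n}{n/2}=2^{\Theta(n)}$. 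The $\CloneVee$ case is the order-dual (collapse a filter). For $\Clone{U}$, $\AA=\langle\{0,1\}^n;\neg,0^n\rangle$ and a congruence is exactly an equivalence relation invariant under bitwise complement; collapsing the complement-closed block $\{0^n,1^n,a,\bar a\}$ and fixing everything else is a congruence $\theta_a$, with $S$ a transversal of the $2^{n-1}-1$ complement orbits other than $\{0^n,1^n\}$.

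In each case the quotient map $\phi_a:\AA\to\AA/\theta_a$ is a valid $\HKP$ oracle with $\ker(\phi_a)=\theta_a$. Crucially, restricted to the search space $S$, the oracle has exactly one marked point, namely $a$, and is otherwise independent of $a$: the only $v\in S$ whose value agrees with the designated value $\phi_a(0^n)$ is $a$ itself. Thus a single query to $\phi_a$, compared against the one fixed value $\phi_a(0^n)$, realizes the search predicate ``is $v$ the marked element,'' so any $q$-query quantum $\HKP$ solver yields a quantum search over $S$ using $q+O(1)$ queries that recovers $a$ from the returned description of $\ker(\phi_a)=\theta_a$. By BBBV, $q=\Omega(\sqrt{|S|})=\Omega((1+\epsilon)^n)$, as claimed.

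The main obstacle is verifying the two properties that make the embedding faithful. First, each proposed $\theta_a$ must genuinely be a congruence of the full top clone, including its constant and complement operations; this is exactly where the ideal-to-a-point observation for the meet-semilattice and the complement-closure of the block $\{0^n,1^n,a,\bar a\}$ for the unary clone do the work. Second, the oracle $\phi_a$ must betray the hidden $a$ only through the search predicate, i.e.\ restricted to $S$ it must carry a single marked point and otherwise be identical across all $a$, so that no auxiliary query pins down $a$ more cheaply than unstructured search. Once these are established the BBBV bound applies verbatim, and the remaining bookkeeping---absorbing the $\mathrm{poly}(n)$ factor from $|S|=\binom{n}{n/2}$ into the choice of $\epsilon$---is routine.
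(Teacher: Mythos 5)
Your proposal is correct in outline but takes a genuinely different route from the paper's. The paper does not reduce from unstructured search: for $\CloneWedge$ and $\CloneVee$ it is Lemma~\ref{lem:vee_wedge_no_P_alg}, which builds for \emph{every} subset $Z$ of the middle layer $M$ the congruence $\theta_Z = \Cg(Z^2)$ and shows via Maltsev chains (Proposition~\ref{prop:matlsev_chain}) that any generating set for $\theta_Z$ has size at least $|Z|$; since there are $2^{|M|}$ such congruences with $|Z|$ as large as $|M|\approx 2^n/\sqrt{n}$, even specifying the answer forces exponentially many queries. For $\Clone{U}$ it is Lemma~\ref{lem:U_no_quantum}, which produces $2^{2^{n-2}}$ distinct congruences $\theta_Y$ indexed by subsets $Y$ of a fixed set of pairs, each requiring $|Y|$ generators, yielding $\Omega(2^n)$. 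Your reduction-to-search argument buys a cleaner and more standard query lower bound --- a genuine hybrid/adversary bound against bounded-error quantum algorithms rather than a counting bound on the description length of the output --- and your ingredients are sound: collapsing a down-set is indeed a $\wedge$-congruence, the block $\{0^n,1^n,a,\bar a\}$ is $\neg$-closed, the descent to smaller clones via Observation~\ref{obs:clones_and_hkp} goes the right way, and the $0\leftrightarrow 1$ duality handles $\CloneVee$. Two caveats. First, you must relabel the codomain $\AA/\theta_a$ (send the large class to a fixed symbol $\star$ and each singleton $\{v\}$ to $v$) before presenting the oracle; otherwise a single query returning the literal coset $\phi_a(0^n)=J_a$ reveals $a$ immediately. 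You correctly identify this as the condition to verify, but the relabeling should be made explicit. Second, the search reduction inherently caps you at $\Omega(\sqrt{|S|})=\Omega(2^{n/2-o(n)})$, i.e.\ only $\epsilon<\sqrt{2}-1$, whereas the paper's counting families carry exponentially many bits of information (not the $O(n)$ bits of a single hidden element) and so support the bound for every $\epsilon\in(0,1)$, and even $\Omega(2^n)$ in the $\Clone{U}$ case. Both arguments suffice for the qualitative conclusion that no polynomial-time quantum algorithm exists, but if the theorem is read as asserting the bound for all $\epsilon\in(0,1)$, your construction alone does not reach it.
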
 

The best way to summarize these two results is in terms of Post's Lattice, see
Figure~\ref{fig:PL_negative}. In particular, powers of algebras whose clones are
contained in the interval between $\Clone{AP}$ and $\Clone{A}$ have $\HKP$ which
exhibits super-polynomial speedup using a quantum algorithm. Clones which are
below $\Clone{U}$, $\CloneWedge$, or $\CloneVee$ have $\HKP$ for which no
polynomial-time quantum algorithm exists. Combined with the results of the
previous section, this provides a complete quantum and classical classification
of the complexity of the $\HKP$ for powers of 2-element algebras.

\begin{figure} \centering 
\begin{tikzpicture}[font=\tiny, yscale=0.85, xscale=0.85]
  \path (0, 0)    node[no_poly_node, clone] (F) {}
       +(-1, 0.5) node[no_poly_node, clone] (UP0) {}
       +(1, 0.5)  node[no_poly_node, clone] (UP1) {}
       ++(0,1)    node[no_poly_node, clone] (UM) {}
       ++(0,1)    node[no_poly_node, clone] (UD) {}
       ++(0,1)    node[no_poly_node, clone] (U) {}
       ++(0,1)    node[superpoly, clone] (AP) {}
        +(-1,0.5) node[superpoly, clone] (AP0) {}
        +(0,0.5)  node[superpoly, clone] (AD) {}
        +(1,0.5)  node[superpoly, clone] (AP1) {}
       ++(0,1)    node[superpoly, clone] (A) {}; 
  \path (UP0)
      ++(0,1.5)     node[no_poly_node, clone] (MeetP1) {}
      ++(-1,-0.5)   node[no_poly_node, clone] (MeetP) {}
       +(0,1)       node[no_poly_node, clone] (Meet) {}
      ++(-1,0.5)    node[no_poly_node, clone] (MeetP0) {}; 
  \path (UP1)
      ++(0,1.5)      node[no_poly_node, clone] (JoinP0) {}
      ++(1,-0.5)     node[no_poly_node, clone] (JoinP) {}
       +(0,1)        node[no_poly_node, clone] (Join) {}
      ++(1,0.5)      node[no_poly_node, clone] (JoinP1) {};  

  \draw[misc] (F) to[curveR] (AP)
    (UP0) to[out=90-15,in=-90] (AP0)
    (UP1) to[out=90+15,in=-90] (AP1)
    (UD) to[curveR] (AD)
    (U) to[curveL] (A);
  \draw[no_poly_line] (F) -- (UP0) -- (UM) -- (UP1) -- (F)
    (F) -- (MeetP)
    (F) -- (JoinP)
    (F) to[curveR] (UD)
    (UP0) -- (MeetP0)
    (UP0) -- (JoinP0)
    (UP1) -- (JoinP1)
    (UP1) -- (MeetP1)
    (UM) to[curveL] (U)
    (UM) -- (Meet)
    (UM) -- (Join)
    (UD) -- (U);
  \draw[superpoly] (AP) -- (AP0) -- (A) -- (AP1) -- (AP) -- (AD) -- (A); 
  \draw[no_poly_line] (MeetP) -- (MeetP0) -- (Meet) -- (MeetP1) -- (MeetP);  
  \draw[no_poly_line] (JoinP) -- (JoinP0) -- (Join) -- (JoinP1) -- (JoinP);  

  \node[cloneLabel, above] at (Meet) {$\CloneWedge$};
  \node[cloneLabel, above] at (Join) {$\CloneVee$};

  \node[cloneLabel, left, yshift=-0.25em] at (F) {$\CloneBot$};
  \node[cloneLabel, left] at (U) {$\Clone{U}$};
  \node[cloneLabel, above, xshift=0.3em] at (A) {$\Clone{A}$};
  \node[cloneLabel, left, align=left] at (AP0) {$\Clone{AP}_0$ \\ $\approx \ZZ_2$};
  \node[cloneLabel, below] at (AP) {$\Clone{AP}$};

  \coordinate (key_center) at ($(current bounding box.north east)!0.5!(current bounding box.south west)$);
  \coordinate (superpoly_key) at ($(key_center)+(-3.45,-3.5)$);
  \draw[superpoly] ($(superpoly_key.west)+(-0.215,-0.25)$) -- node[black, right]{:} ($(superpoly_key.west)+(-0.215,0.25)$);
  \node[anchor=west, align=left] at (superpoly_key) {only quantum poly-time solution \\ (super-polynomial speedup)};
  \coordinate (no_poly_key) at ($(key_center)+(0.85,-3.5)$);
  \draw[no_poly_line, very thick] ($(no_poly_key.west)+(-0.215,-0.25)$) -- node[black, right]{:} ($(no_poly_key.west)+(-0.215,0.25)$);
  \node[anchor=west, align=left] at (no_poly_key) {no poly-time classical \\ or quantum solution exists};
\end{tikzpicture} 
  \caption{ The fragment of Post's Lattice for which no classical
    polynomial-time algorithm for the $\HKP$ exists (whole diagram). Dotted
    lines indicate clones for which no polynomial-time quantum algorithm exists,
    while bold lines indicate clones for which no polynomial-time classical
    algorithm exists, but a polynomial-time quantum algorithm does exist (i.e.\
    those clones which exhibit super-polynomial speedup). }
  \Description{ The fragment of Post's Lattice consisting of clones which are
    contained in $\CloneWedge$, $\CloneVee$, or $\Clone{A}$. The interval
    $[\Clone{AP}, \Clone{A}]$ exhibits superpolynomial-speedup with a quantum
    algorithm. }
  \label{fig:PL_negative}
\end{figure}
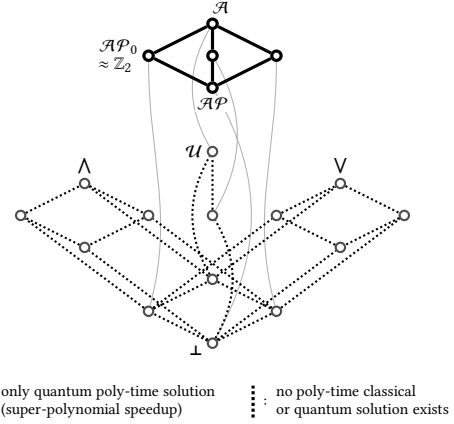  

The remainder of the section is devoted to proving
Theorems~\ref{thm:no_classical} and~\ref{thm:no_QA}. The proof technique relies
on various counting arguments and a close analysis of congruence generation in
different algebras. The primary tool we use is a result due to
Maltsev~\cite{Maltsev_CongChain}, commonly called a ``Maltsev chain''.

\begin{defn} \label{defn:maltsev_chain} 
Let $\AA$ be an algebra and $X\subseteq A^2$. If $\alpha, \beta\in A$ are such
that there exists an $m$-ary term operation $t$ and pairs $(a_1,b_1), \dots,
(a_m, b_m)\in X\cup \Cong0$ with 
\[
  \big\{ \alpha, \beta \big\} 
  = \big\{ t(a_1, \dots, a_m), t(b_1, \dots, b_m) \big\}
\]
then we will write $\alpha \leadsto \beta$. A \emph{Maltsev chain}
between $\alpha$ and $\beta$ is a sequence $\lambda_0, \dots, \lambda_n \in A$
such that
\[
  \alpha
  = \lambda_0
  \leadsto \lambda_1
  \leadsto 
  \dots
  \leadsto \lambda_n
  = \beta,
\]
written $\alpha \leadsto^n \beta$.
\end{defn}  

\begin{prop}[Maltsev~\cite{Maltsev_CongChain}] \label{prop:matlsev_chain}  
Let $\AA$ be an algebra, $X\subseteq A^2$, and $\theta = \Cg(X)$. Then
$(\alpha,\beta)\in \theta$ if and only if $\alpha \leadsto^n \beta$
for some $n$.
\end{prop}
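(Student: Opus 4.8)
The plan is to introduce the relation $R = \{(\alpha,\beta)\in A^2 : \alpha \leadsto^n \beta \text{ for some } n\}$ and prove that $R$ coincides with $\theta = \Cg(X)$. Since $\Cg(X)$ is by definition the least congruence of $\AA$ containing $X$, it suffices to establish two things: first, the containment $R \subseteq \theta$; and second, that $R$ is itself a congruence containing $X$, which forces $\theta \subseteq R$ by minimality. Combining the two containments gives $R = \theta$, which is precisely the claimed biconditional.

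The containment $R \subseteq \theta$ is the routine direction. For a single step $\lambda \leadsto \mu$ we have $\{\lambda,\mu\} = \{t(a_1,\dots,a_m), t(b_1,\dots,b_m)\}$ with each $(a_j,b_j)\in X\cup\Cong0 \subseteq \theta$. Because $\theta$ is a congruence, it is closed under the term operation $t$, so $(t(a_1,\dots,a_m), t(b_1,\dots,b_m))\in\theta$; as $\theta$ is symmetric, $(\lambda,\mu)\in\theta$ regardless of orientation. Transitivity of $\theta$ then propagates membership along an entire Maltsev chain, so $\alpha \leadsto^n \beta$ yields $(\alpha,\beta)\in\theta$.

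For the second task I would first dispense with the equivalence-relation properties of $R$. Reflexivity comes from the length-zero chain (or the projection term applied to a $\Cong0$-pair); the relation $\leadsto$ is symmetric since $\{\alpha,\beta\}$ is an unordered pair, so reversing a chain shows $R$ is symmetric; and transitivity follows from concatenating chains. The unary projection together with a single pair $(a,b)\in X$ witnesses $a\leadsto b$, giving $X\subseteq R$. The principal remaining task, and the step I expect to require the most care, is compatibility of $R$ with each operation $f$ of $\AA$. Here I would use transitivity to change one coordinate at a time, reducing compatibility to the claim that if $\alpha \mathrel{R} \beta$ and $c_1,\dots,c_{i-1},c_{i+1},\dots,c_k\in A$ are arbitrary, then $f(c_1,\dots,\alpha,\dots,c_k) \mathrel{R} f(c_1,\dots,\beta,\dots,c_k)$. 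Threading a Maltsev chain for $(\alpha,\beta)$ through $f$ reduces this further to a single step $\{\alpha,\beta\} = \{t(a_1,\dots,a_m), t(b_1,\dots,b_m)\}$, where the key construction is the composite term
\[
  s(y_1,\dots,y_{i-1},x_1,\dots,x_m,y_{i+1},\dots,y_k)
  \coloneqq f\big(y_1,\dots,y_{i-1}, t(x_1,\dots,x_m), y_{i+1},\dots,y_k\big),
\]
fed the diagonal pairs $(c_j,c_j)\in\Cong0$ in the $y_j$-slots and the pairs $(a_l,b_l)\in X\cup\Cong0$ in the $x_l$-slots.

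This construction yields $f(c_1,\dots,\alpha,\dots,c_k) \leadsto f(c_1,\dots,\beta,\dots,c_k)$ in a single step, completing compatibility. The crucial observation — and exactly why the definition of $\leadsto$ permits pairs from $\Cong0$ rather than only from $X$ — is that the frozen coordinates $c_j$ must be passed through $f$ as diagonal pairs; without $\Cong0$ available, one could not hold coordinates fixed while substituting a witnessing tuple into $f$. Once compatibility is secured, $R$ is a congruence containing $X$, so $\theta\subseteq R$, and together with $R\subseteq\theta$ we conclude $R = \theta$, establishing the proposition.
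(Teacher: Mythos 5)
The paper does not prove this proposition; it is quoted as a classical result of Maltsev and used as a black box, so there is no in-paper argument to compare against. Your proof is the standard (and correct) one: showing $R\subseteq\Cg(X)$ via closure of congruences under term operations plus transitivity, and showing $R$ is a congruence containing $X$ so that minimality gives the reverse inclusion, with the only delicate step being operation-compatibility handled by the composite term $s = f(\,\cdot\,,t(\,\cdot\,),\,\cdot\,)$ fed diagonal pairs in the frozen coordinates --- your remark that this is precisely why $\Cong0$-pairs must be admitted in the definition of $\leadsto$ is exactly the right observation. No gaps.
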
  

We begin by proving that if $\alg{B} \preceq \CloneWedge$ or $\alg{B} \preceq
\CloneVee$, then exponentially-many queries to the oracle are necessary to solve
the $\HKP$ for powers of $\alg{B}$ for both classical and quantum algorithms.
This proves items~1 and~2 of Theorems~\ref{thm:no_classical}
and~\ref{thm:no_QA}. The proof of this is contained in the following lemma,
combined with Observation~\ref{obs:clones_and_hkp}.

\begin{lem} \label{lem:vee_wedge_no_P_alg} 
Let $\alg{B}$ be a 2-element algebra such that $\Clo(\alg{B}) = \CloneWedge$ or
$\Clo(\alg{B}) = \CloneVee$ and let $\AA = \alg{B}^n$. Any algorithm (classical or
quantum) solving $\HKP(\AA)$ must make $\Omega((1 + \epsilon)^n)$ queries to the
oracle for $0 < \epsilon < 1$.
\end{lem}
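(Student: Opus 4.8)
The plan is to reduce $\HKP(\alg B^n)$ to recovering an unstructured binary string of length $\binom{n}{\lfloor n/2\rfloor}$, and then invoke the known classical and quantum query lower bounds for string recovery. First I would dispose of the duality: the negation map $\neg$ is an isomorphism between $\langle\{0,1\};\wedge,0,1\rangle$ and $\langle\{0,1\};\vee,1,0\rangle$, so it induces an isomorphism $\alg B^n \cong \alg C^n$ whenever $\Clo(\alg B)=\CloneWedge$ and $\Clo(\alg C)=\CloneVee$. Since an isomorphism of the algebra carries $\HKP$ instances to $\HKP$ instances bijectively and preserves the number of oracle queries, the query complexity is identical in the two cases, and it suffices to treat $\Clo(\alg B)=\CloneWedge$. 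Here $\AA=\alg B^n$ is the meet-semilattice with constants on the power set $\mathcal P([n])$, with $\wedge=\cap$, $0=\emptyset$, and $1=[n]$.

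Next I would build an exponentially large family of hard instances out of down-sets. For a subset $T\subseteq \binom{[n]}{\lfloor n/2\rfloor}$ of the middle layer, set $S_T=\{x : |x|<\lfloor n/2\rfloor\}\cup T$. Each $S_T$ is an order ideal, and I claim that the equivalence relation $\theta_T$ whose only non-trivial class is $S_T$ is a congruence of $\AA$: if $x,y\in S_T$ then $x\cap c$ and $y\cap c$ lie below $x$ and $y$ respectively and hence remain in the down-set $S_T$, so the substitution property holds for $\cap$, and the nullary operations are vacuously compatible. Proposition~\ref{prop:matlsev_chain} can be used to confirm that $\Cg$ of the collapsed pairs is exactly $\theta_T$, since every Maltsev chain that begins inside $S_T$ stays inside $S_T$. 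Distinct $T$ give distinct ideals and hence distinct congruences, so $\{\theta_T\}$ is a family of $2^{\binom{n}{\lfloor n/2\rfloor}}$ pairwise-distinct kernels, and $\theta_T$ determines and is determined by $T$.

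I would then exhibit the reduction. Taking $\phi_T\colon\AA\to\AA/\theta_T$ to be the quotient map gives an $\HKP$ instance whose solution is exactly $\theta_T$, equivalently the characteristic vector of $T$. After relabelling the codomain, a query at $x$ reveals only whether $x$ collapses with the fixed reference point $\emptyset\in S_T$, i.e.\ exactly the bit $[x\in S_T]$; for $x$ in the middle layer this is the bit $[x\in T]$, and these $\binom{n}{\lfloor n/2\rfloor}$ bits are mutually independent, since every $T$ is allowed. Hence solving $\HKP(\AA)$ on this family is precisely the recovery of a free $N$-bit string with $N=\binom{n}{\lfloor n/2\rfloor}$ via single-bit oracle queries. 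Each classical query reveals at most one bit, so $\Omega(N)$ classical queries are necessary; and by the standard quantum lower bound for oracle interrogation (recovering $N$ independent oracle bits requires $\Omega(N)$ quantum queries), $\Omega(N)$ quantum queries are necessary as well. Since $N=\binom{n}{\lfloor n/2\rfloor}=\Theta(2^n/\sqrt{n})$, both bounds are $\Omega((1+\epsilon)^n)$ for every $0<\epsilon<1$, proving the lemma.

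The main obstacle I anticipate is the quantum half. The classical bound is an immediate one-bit-per-query counting argument, but the quantum statement requires a genuine quantum query lower bound: one must argue that superposition queries cannot compress the recovery of the $N$ independent middle-layer bits below $\Omega(N)$, which is where an information-theoretic (Holevo or hybrid) or polynomial-method argument is needed. A secondary point requiring care is verifying that the collapsed-ideal relations really are congruences and that the middle-layer bits are genuinely unconstrained, so that the induced oracle is a faithful unstructured bit oracle rather than one with exploitable structure.
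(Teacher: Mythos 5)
Your proposal is correct, but it reaches the bound by a genuinely different route than the paper. The paper also works over the middle layer $M$ of the Boolean lattice, but it takes $\theta_Z=\Cg(Z^2)$ for $Z\subseteq M$ (collapsing an arbitrary antichain subset into one class) and then uses Maltsev chains to prove a purely algebraic fact: any set $X$ with $\Cg(X)=\theta_Z$ must satisfy $|X|\geq|Z|$. The lower bound is then extracted from the facts that there are $2^{|M|}$ such congruences and that each requires an exponential-size generating set to specify; the step converting this into a \emph{query} lower bound (especially the quantum one) is left rather informal. You instead build down-set congruences $\theta_T$ whose single nontrivial class is $\{x:|x|<\lfloor n/2\rfloor\}\cup T$, verify the substitution property directly (Maltsev chains are not actually needed for this), and observe that with an adversarial labelling of the quotient the oracle becomes exactly an unstructured membership oracle for the $N=\binom{n}{\lfloor n/2\rfloor}$ middle-layer bits. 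This cleanly separates the algebraic content from the complexity-theoretic content and lets you cite the standard $\Omega(N)$ classical and quantum lower bounds for full oracle interrogation (van Dam's bound, or an adversary/Holevo argument), which arguably makes the quantum half more rigorous than the paper's own treatment; the trade-off is that your argument depends on an external quantum query-complexity result, whereas the paper's is self-contained and additionally establishes the stronger structural fact that the \emph{answers themselves} (generating sets for the hidden congruences) are exponentially large. Your flagged obstacle is the right one to flag, but it is a known theorem rather than a gap, so the proposal stands.
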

\begin{proof}
We will present the argument for $\Clo(\alg{B}) = \CloneWedge$. The argument for
$\Clo(\alg{B}) = \CloneVee$ is analogous. The clone $\CloneWedge$ has $x\wedge
y$ amongst its generating operations (see Figure~\ref{fig:posts_lattice}). The
set $\{0,1\}$ has a natural order (namely $0\leq 1$), and this can be extended
to an ordering on $\{0,1\}^n$. The operation of $\wedge$ is compatible with this
order: $x \wedge y = x$ if and only if $x \leq y$.

Define a set of $\leq$-incomparable elements of $\{0,1\}^n$,
\[
  M
  = \big\{ a \in A \mid \text{exactly } \lfloor n/2 \rfloor 
    \text{ coordinates of $a$ are $0$} \big\}
\]
and choose arbitrary $Z\subseteq M$. Let $\theta_Z$ be the smallest congruence
of $\AA$ with $Z^2$ contained in a single equivalence class --- that is,
$\theta_Z \coloneqq \Cg(Z^2)$. We will carefully analyze the generation of
$\theta_Z$ via a series of claims.

\begin{claim*}
Let $X\subseteq A^2$ and $\theta = \Cg(X)$. If $\alpha \;\theta\; \beta$ and
$\alpha \neq \beta$ then there is $(a,b)\in X$ with $\alpha \leq a$ or
$\alpha\leq b$.
\end{claim*}
\begin{claimproof}
By Proposition~\ref{prop:matlsev_chain}, there is an $m$-ary term operation $t$
and pairs $(a_1,b_1), \dots, (a_m,b_m)\in X\cup \Cong0$ such that $\alpha \in \{
t(a_1, \dots, a_m), t(b_1, \dots, b_m) \}$. Each term operation of $\alg{B}$
(and hence $\AA$) can be written
\[
  t(x_1, \dots, x_m) = x_{i_1} \wedge \dots \wedge x_{i_k}
\]
where $\{i_1, \dots, i_k\} \subseteq [m]$. The claim follows.
\end{claimproof}

\begin{claim*}
If $z\in Z$ and $\gamma > z$ then $(z,\gamma) \not\in\theta_Z$.
\end{claim*}
\begin{claimproof}
Suppose towards a contradiction that $z \;\theta_Z\; \gamma$. By the first
claim, there must be some $z'\in Z$ such that $\gamma\leq z'$. This implies that
$z < z'$, but since $Z\subseteq M$ and $M$ consists of $\leq$-incomparable
elements, this is impossible.
\end{claimproof}

\begin{claim*}
If $X \subseteq A^2$ is such that $\theta_Z = \Cg(X)$, then $|Z| \leq |X|$.
\end{claim*}
\begin{claimproof}
By the first claim, for each $z\in Z$ there is some $(a_z, b_z)\in X$ with
(without loss of generality) $z\leq a_z$. We argue that each $a_z$ is distinct
and thus $|Z|\leq |X|$.

Suppose that there are distinct $z, z'\in Z$ such that $a_z = a_{z'}$.
Summarizing our assumptions,
\[
  (a_z, b_z) 
  \in X
  \subseteq \Cg(X) 
  = \theta_Z
  = \Cg(Z^2).
\]
Applying the first claim with $Z$ in place of $X$ yields some $u\in Z$ such that
$b_z \leq u$. Since $(a_z, b_z), (z, u)\in \theta_Z$, we have
\[
  z
  = \big( a_z \wedge z \big)
  \;\theta_Z\; \big( b_z \wedge u \big)
  = b_z
\]
and hence $a_z \;\theta_Z\; z$. The second claim together with $z\leq a_z$ now
yields a contradiction unless $z = a_z$. The same argument for $z'$ then gives
us $z = a_z = a_{z'} = z'$, contradicting $z$ and $z'$ being distinct and
finishing the proof of the claim.
\end{claimproof}

From the claim above, determining the congruence $\theta_Z$ requires specifying
at least $|Z|$-many elements. Using Stirling's approximation, we have
\[
  |M|
  = \binom{n}{n/2}
  = \frac{n!}{\big( (n/2)! \big)^2}
  \approx \frac{\sqrt{n} (n/e)^n}{ \big( \sqrt{n} (n/(2e))^{n/2} \big)^2}
  = \frac{2^n}{\sqrt{n}},
\]
where the ``$\approx$'' symbol means that both sides have the same
$\Theta$-complexity class. There are thus exponentially-many subsets $Z$ with
$|Z|\in \Omega((1 + \epsilon)^n)$ for $0 < \epsilon < 1$. Any algorithm (quantum
or classical) for solving $\HKP(\AA)$ must distinguish between these subsets,
and by the previous claim this requires specifying at least $|Z|$-many elements.
\end{proof} 

In order to complete the proof of Theorem~\ref{thm:no_classical}, which we
opened the section with, we need only prove item~3: for all $\alg{B} \preceq
\Clone{A}$, the $\HKP$ for powers of $\alg{B}$ \emph{classically} requires
exponentially-many oracle queries. The proof of this is contained in the
following lemma, combined with Observation~\ref{obs:clones_and_hkp}.

\begin{lem} \label{lem:A_no_classical} 
Let $\alg{B}$ be a 2-element algebra with $\Clo(\alg{B}) = \Clone{A}$ and let
$\AA = \alg{B}^n$. Any classical (resp.\ probabilistic) algorithm solving
$\HKP(\AA)$ must make $\Omega(2^n)$ (resp.\ $\Omega(2^{n/2})$) queries to the
oracle.
\end{lem}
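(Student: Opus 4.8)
The plan is to recognize $\HKP(\AA)$ as the hidden subgroup problem on $\ZZ_2^n$ and then to isolate its collision-finding core. Because $\Clo(\alg B) = \Clone{A}$ contains the operation $x + y$ (addition mod $2$), the power $\AA = \alg B^n$ has componentwise addition among its term operations; hence every congruence of $\AA$ is compatible with $+$ and so is the coset partition of a subspace $W \leq \ZZ_2^n$, and conversely every subspace yields a congruence. An oracle $\phi$ with $\ker(\phi) = W$ is thus exactly a map that is constant on the cosets of $W$ and distinct across them, so solving $\HKP(\AA)$ amounts to recovering $W$. By Proposition~\ref{prop:matlsev_chain}, each observed collision $\phi(a) = \phi(a')$ forces $a + a' \in W$ and contributes precisely the vector $a + a'$ to the generated subspace, so $W$ is only ever accessible through collisions.

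To obtain the strongest bound I would restrict to the hardest family, the $2^n - 1$ one-dimensional subspaces $W = \{0, s\}$ with $s \neq 0$ (using Observation~\ref{obs:clones_and_hkp} to remain inside valid instances of the problem). For these $\phi$ is two-to-one with fibers $\{a, a+s\}$, and determining $W$ is equivalent to finding the hidden mask $s$ --- that is, to Simon's problem --- whose only classical handle is to produce a colliding pair $a, a'$ with $a + a' = s$.

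For the probabilistic bound I would argue distributionally (Yao), drawing $s$ uniformly from $\ZZ_2^n \setminus \{0\}$. For any adaptively generated sequence of $q$ queries, a fixed pair of queried points collides iff their difference equals $s$, an event of probability $1/(2^n - 1)$; a union bound over the $\binom{q}{2}$ pairs bounds the probability of ever seeing a collision by $\binom{q}{2}/(2^n-1)$. In the (dominant) event that no collision occurs, the entire transcript is identical for every mask still consistent with the answers and is therefore independent of the true $s$, so the output is correct with probability $O(2^{-n})$. Combining the two cases, the success probability is $o(1)$ whenever $q = o(2^{n/2})$, and Yao's principle converts this into an $\Omega(2^{n/2})$ lower bound against every randomized algorithm.

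For the deterministic bound the plan is an adaptive adversary that answers each query with a fresh, previously-unseen value for as long as this stays consistent with the two-to-one promise: after a query set $Q$ the masks still consistent with all-distinct answers are exactly those $s \notin (Q - Q)$, and a deterministic algorithm may halt only once a single mask survives. Since a two-to-one map has $2^{n-1}$ fibers, one hopes the adversary can sustain distinct answers --- and thereby force continued querying --- for $\Omega(2^n)$ steps. I expect this deterministic analysis to be the \emph{main obstacle}: distinct answers are consistent only while some $s$ leaves $Q$ genuinely $s$-free, so one must control how fast the surviving set $\{s : s \notin (Q-Q)\}$ contracts as $Q$ grows. Because $|Q - Q|$ can grow quadratically in $|Q|$, a naive count only forbids success up to $q \approx 2^{n/2}$, and the delicate point is to exhibit an adversary (or a restricted family of congruences) for which no short ``difference-covering'' query strategy is available, thereby pushing the deterministic bound all the way to $\Omega(2^n)$ rather than merely $\Omega(2^{n/2})$.
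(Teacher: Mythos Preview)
Your treatment of the probabilistic $\Omega(2^{n/2})$ bound is essentially the paper's: restrict to the congruences $\theta_z=\Cg(0^n,z)$ (whose classes are exactly the pairs $\{x,x+z\}$), draw $z$ uniformly, and observe that $q$ queries can reveal $z$ only through a collision, which by a union bound over the $\binom{q}{2}$ pairs has probability at most $\binom{q}{2}/(2^n-1)$. You invoke Yao's principle and isolate the no-collision transcript more explicitly than the paper does, but the calculation is the same.

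For the deterministic $\Omega(2^n)$ bound you are right to flag the quadratic growth of $Q+Q$ as the main obstacle --- and the paper does not overcome it either: its proof simply ends with the unsupported sentence ``To be correct with probability $1$ requires $|E|\in\Omega(2^n)$ evaluations.'' Your worry is in fact decisive against this entire line of argument. For even $n=2m$ the set $Q=(\ZZ_2^m\times\{0\})\cup(\{0\}\times\ZZ_2^m)$ has $|Q|=2^{n/2+1}-1$ and $Q+Q=\ZZ_2^n$; a deterministic algorithm that queries all of $Q$ recovers the entire hidden subspace $W$ from the collision pattern, since $\phi(q)=\phi(q')$ iff $q+q'\in W$ and every nonzero $w\in W$ arises as such a sum. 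Thus neither your adversary plan nor the paper's one-line assertion can push past $\Omega(2^{n/2})$ via this family (or, by the same observation, via any family of $\ZZ_2^n$-congruences). That weaker bound is all Theorem~\ref{thm:no_classical} actually uses downstream, and your proposal already delivers it.
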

\begin{proof}
The clone $\Clone{A}$ has the operation $x \Aiff y$ as a generating operation
(see Figure~\ref{fig:posts_lattice}). Choose a random $z\in A \setminus \{ 0^n
\}$ and define $\theta_z$ to be the congruence generated by relating $0^n$ and
$z$,
\[
  \theta_z \coloneqq \Cg(0^n, z).
\]
Observe that modulo-2 addition is definable in $\Clone{A}$ (and hence $\alg{B}$
and $\AA$) by
\[
  x + y \coloneqq
  x \Aiff (0 \Aiff y).
\]

\begin{claim*}
The congruence $\theta_z$ has equivalence classes $x/\theta_z \coloneqq \big\{
x, x+z \big\}$.
\end{claim*}
\begin{claimproof}
Take $\psi$ to be the equivalence relation with the claimed equivalence
classes. To prove that $\psi$ is a congruence we need only show that if $a
\;\psi\; a'$ and $b \;\psi\; b'$ then
\[
  ( a \Aiff b) \;\psi\; ( a' \Aiff b' ).
\]
By the definition of $\psi$, we have that $a' = a + \epsilon z$ and $b' = b +
\tau z$ for some $\epsilon, \tau\in \{0, 1\}$. The operation $\Aiff$ is
commutative and associative, so using ``$+$'' as defined before the claim, we
have
\begin{align*}
  (w + x) \Aiff y 
    &= w \Aiff 0 \Aiff x \Aiff y \\
  &= w \Aiff 0 \Aiff y \Aiff x 
    = (w + y) \Aiff x.
\end{align*}
Thus $(w+x) \Aiff y = (w+y) \Aiff x$. Using this identity twice,
\begin{align*}
  a' \Aiff b'
    &= (a + \epsilon z) \Aiff (b + \tau z) \\
  &= \big( a \Aiff (b + \tau z) \big) + \epsilon z
    = ( a \Aiff b ) + (\epsilon + \tau) z.
\end{align*}
Addition is defined modulo 2, so $\epsilon + \tau\in \{0,1\}$ and we have $( a
\Aiff a) \;\psi\; ( a' \Aiff b' )$, as desired. The congruence $\theta_z$ is
minimal and contains $\psi$, so $\theta_z = \psi$.
\end{claimproof}

We will now show that it is exponentially hard for a classical algorithm to
distinguish between $\theta_z$ and the identity congruence $\Cong0$ for randomly
chosen $z$. Let $\phi$ be the homomorphism
\[ \begin{array}{rcl}
  \phi : \AA & \to     & \AA/\theta_z, \\
           a & \mapsto & a/\theta_z,
\end{array} \]
so that $\ker(\phi) = \theta_z$. 

Suppose that we have a classical procedure for solving $\HKP(\AA)$, and that
this procedure evaluates $\phi$ on a subset $E = \{e_1, \dots, e_\ell\}
\subseteq A$. We are able to distinguish $\theta_z$ from $\Cong0$ if and only if
$|\phi(E)| < |E|$. By the claim above, for distinct $e_i, e_j\in E$, we have
that $\phi(e_i) = \phi(e_j)$ if and only if $e_i = e_j + z$. The element $z$ was
chosen randomly, so the probability of this occurring is $1/(|\AA|-1)$. It
follows that for fixed $j$, the probability that $e_i = e_j + z$ for some $i<j$
is
\[
  \Prob\Big(\exists i < j \; \big[ e_i = e_j + z \big] \Big)
  = \frac{j-1}{|\AA| - 1}.
\]
Therefore the probability that for some distinct $e_i, e_j\in E$ we have $e_i =
e_j + z$ (equivalently, $|\phi(E)| < |E|$) is
\begin{align*}
  &\Prob\big( |\phi(E)| < |E| \big)
  = \Prob\Big(\exists i, j \; \big[ e_i = e_j + z \big] \Big) \\
  &\qquad = \sum_{j=1}^{|E|} \Prob\Big(\exists i < j \; \big[ e_i = e_j + z \big] \Big)
  = \sum_{j=1}^{|E|} \frac{j-1}{|\AA| - 1} \\
  &\qquad = \frac{|E|(|E|-1)}{2(|\AA| - 1)}.
\end{align*}
Thus, the only way for a classical algorithm to correctly distinguish $\theta_z$
from $\Cong0$ with probability at least $1/2$ is by making $|E| \in
\Omega(|\AA|^{1/2}) = \Omega(2^{n/2})$ evaluations of $\phi$. To be correct with
probability 1 requires $|E|\in \Omega(2^n)$ evaluations.
\end{proof} 

We now complete the proof of Theorem~\ref{thm:no_QA} from the start of the
section. It remains to prove the last item of that theorem, that for all
$\alg{B} \preceq \Clone{U}$, the $\HKP$ for powers of $\alg{B}$ requires
exponentially-many oracle queries for a \emph{quantum} algorithm.
Observation~\ref{obs:clones_and_hkp} together with the next lemma establish
this.

\begin{lem} \label{lem:U_no_quantum}  
Let $\alg{B}$ be a 2-element algebra with $\Clo(\alg{B}) = \Clone{U}$ and let
$\AA = \alg{B}^n$. Any algorithm (classical or quantum) solving $\HKP(\AA)$ must
make $\Omega(2^n)$ queries to the oracle.
\end{lem}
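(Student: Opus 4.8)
The plan is to first pin down exactly what the congruences of $\AA$ look like, and then to build a family of congruences so large and ``unstructured'' that recovering a hidden one forces the algorithm to read essentially the whole oracle.

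Since $\Clo(\alg{B}) = \Clone{U}$ is generated by $\neg x$ and the constant $0$, with $\neg$ unary and the remaining generators nullary, every $m$-ary term operation of $\alg{B}$ is one of $0$, $1$, $x_i$, or $\neg x_i$; on $\AA = \alg{B}^n$ these act coordinatewise, so the $m$-ary term operations of $\AA$ are exactly
\[
  0^n,\quad 1^n,\quad a^{(i)},\quad \neg a^{(i)} \qquad (i\in[m]),
\]
where $\neg$ denotes coordinatewise complement. Feeding this into the Maltsev-chain criterion (Proposition~\ref{prop:matlsev_chain}), a one-step relation $\alpha \leadsto \beta$ from a generating set $X$ can only produce a pair from $X\cup \Cong0$ or its coordinatewise complement. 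Hence $\Cg(X)$ is the equivalence closure of $X\cup \neg X$, and a binary relation is a congruence of $\AA$ precisely when it is an equivalence relation invariant under the antipodal map $(a,b)\mapsto(\neg a,\neg b)$. In particular, for every set $T$ of antipodal pairs $\{x,\neg x\}$ the relation $\theta_T$ whose nontrivial blocks are exactly the pairs in $T$ (and whose remaining blocks are singletons) is negation-invariant, hence a congruence; this produces $2^{2^{n-1}}$ distinct congruences, one for each $T\subseteq P$, where $P$ is the set of $2^{n-1}$ antipodal pairs.

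I would then reduce a provably hard Boolean oracle problem to $\HKP(\AA)$ restricted to this antipodal family. Identify $T\subseteq P$ with a string $w$ of length $2^{n-1}$ via $w_p = 1 \Leftrightarrow p\in T$, and let $\phi_w$ be the quotient homomorphism with $\ker(\phi_w)=\theta_{T_w}$. Working in the standard black-box model used throughout (as for Simon's circuit), the value of $\phi_w$ on a single input is an abstract class label: the algorithm may only test outputs for equality, and for a pair $p=\{x,\neg x\}$ the bit $w_p$ is observable solely through the collision $\phi_w(x)=\phi_w(\neg x)$. A single query to $\phi_w$ can be simulated with $O(1)$ queries to an index oracle for $w$ (fetch $w_{p(x)}$, compute the canonical class representative of $x$, then uncompute), and any solution to $\HKP(\AA)$ returns $\theta_{T_w}$, hence all of $w$. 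Thus $\HKP(\AA)$ is at least as hard as learning $w$ in full, and in particular as hard as computing the parity $\bigoplus_p w_p$.

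The final and hardest step is the quantum lower bound. Computing the parity of the $2^{n-1}$ independent collision bits has quantum query complexity $\Omega(2^{n-1})$: each bit is a collision predicate on a disjoint pair of inputs with a constant adversary bound, and the $2^{n-1}$ instances are independent, so a composition/direct-sum argument for the adversary method (equivalently, the $\Omega(N)$ quantum lower bound for $\mathrm{PARITY}$ of Beals--Buhrman--Cleve--Mosca--de~Wolf) forces $\Omega(2^n)$ queries; the classical bound is the easier observation that each pair must be probed at both antipodes. The main obstacle is precisely this quantum step: one must verify that the black-box class labels leak no collision information from a single query — so that each bit genuinely costs $\Omega(1)$ quantum queries — and that the $2^{n-1}$ bits compose without loss, ruling out any Bernstein--Vazirani-style parallel extraction and pushing the bound all the way to $\Omega(2^n)$ rather than the $\Omega(2^{n/2})$ that a single hidden collision would permit.
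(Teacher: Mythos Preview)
Your argument is correct, but it takes a noticeably different route from the paper's.

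\textbf{Choice of congruence family.} The paper does not use antipodal pairs. It partitions $A$ into four equal pieces $A_0,B_0,A_1=\neg A_0,B_1=\neg B_0$, fixes a bijection $a_i\leftrightarrow b_i$ between $A_0$ and $B_0$, and for each $Y\subseteq\{(a_i,b_i)\}$ shows directly (via the Maltsev-chain description, just as you do) that $\theta_Y=\Cong0\cup Y\cup Y^{\partial}\cup(\neg Y)\cup(\neg Y)^{\partial}$. This gives $2^{2^{n-2}}$ distinct congruences, each with two-element blocks that come in $\neg$-paired copies, rather than your $2^{2^{n-1}}$ congruences built from self-$\neg$-invariant blocks $\{x,\neg x\}$. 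Both constructions rest on the same structural observation you state cleanly: since every term of $\Clone{U}$ is essentially unary, $\Cg(X)$ is just the equivalence closure of $X\cup\neg X$.

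\textbf{Lower-bound mechanism.} Here the approaches diverge more substantially. The paper gives a pure output-size/information argument: every generating set for $\theta_Y$ has size at least $|Y|$, and there are $\Omega(2^n)$ choices of $Y$ with $|Y|\in\Omega(2^n)$, so any algorithm that must \emph{name} the hidden congruence needs $\Omega(2^n)$ queries simply to produce that much output-dependent information. No reduction, no adversary method, no PARITY. Your approach instead builds an explicit oracle reduction: simulate $\phi_w$ from an index oracle for $w\in\{0,1\}^{2^{n-1}}$ with $O(1)$ queries per call, observe that solving $\HKP$ recovers all of $w$, and then invoke the Beals--Buhrman--Cleve--Mosca--de~Wolf $\Omega(N)$ bound for PARITY (or string recovery). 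This is heavier machinery, but it buys you something: the quantum step is fully rigorous and the constant is explicit, whereas the paper's ``large output $\Rightarrow$ many queries'' step is stated informally and, if made precise via Holevo-type counting, would naturally yield $\Omega(2^n/n)$ rather than $\Omega(2^n)$.

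One minor comment: your final paragraph mixes two arguments. Once you have the simulation of $\phi_w$ from the $w$-oracle, the PARITY (or identity-function) lower bound on the $w$-oracle finishes the proof immediately; the separate ``collision-predicate plus composition'' discussion about queries to $\phi_w$ itself is redundant and is where the worries about label leakage and Bernstein--Vazirani arise. You can drop that paragraph and the proof becomes both shorter and cleaner than the paper's, while giving a sharper quantum bound.
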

\begin{proof}
The clone $\Clone{U}$ has $\neg x$ amongst its generating operations (see
Figure~\ref{fig:posts_lattice}). Partition $A$ into four disjoint sets of equal
size, $A_0$, $B_0$, $A_1$, $B_1$, such that
\[
  A_1
    = \neg A_0 
    = \big\{ \neg a \mid a\in A_0 \big\},
  \qquad
  B_1 
    = \neg B_0 
    = \big\{ \neg b \mid b\in B_0 \big\}.
\]
Enumerate the elements of $A_0$ and $B_0$ as $a_1, \dots, a_{2^{n-2}}$ and
$b_1, \dots, b_{2^{n-2}}$ respectively. Define $C = \{ (a_i, b_i) \mid i\in
[2^{n-2}] \}$. Finally, for each subset $Y\subseteq C$ let $\theta_Y = \Cg(Y)$.

We will show that all the $\theta_Y$ are distinct and that specifying $\theta_Y$
by generators requires finding a set of size $|Y|$. There are $\Omega(2^n)$-many
distinct $Y$ of size $\Omega(2^n)$, so this is sufficient to prove the lemma.

\begin{claim*}
Suppose that $Y = \{ (a_i, b_i) \mid i\in I \}\subseteq C$ for some $I\subseteq
[2^{n-2}]$. Then
\[
  \theta_Y
  = \Cong0 \cup Y \cup Y^\partial \cup (\neg Y) \cup (\neg Y)^\partial
\]
where $\neg Y \coloneqq \{ (\neg a_i, \neg b_i) \mid i\in I \}$ and $Z^\partial
\coloneqq \{ (y,x) \mid (x,y)\in Z \}$ for $Z\subseteq A^2$.
\end{claim*}
\begin{claimproof}
Let $\psi = \Cong0 \cup Y \cup Y^\partial \cup (\neg Y) \cup (\neg Y)^\partial$.
The congruence $\theta_Y$ is the least congruence containing $Y$, so $\psi
\subseteq \theta_Y$. In order to show that $\psi = \theta_Y$, it is therefore
enough to show that $\psi$ is closed under the operation of $\neg$. This is
clear from the construction of $\psi$.
\end{claimproof}

By the above claim, each distinct $Y\subseteq C$ determines a distinct
$\theta_Y$, and specifying $\theta_Y$ requires producing a subset of size $|Y|$.
Since there are $\Omega(2^n)$-many distinct $Y$ of size $\Omega(2^n)$, the
conclusion of the lemma follows.
\end{proof} 

\section{Conclusion} \label{sec:concl}  
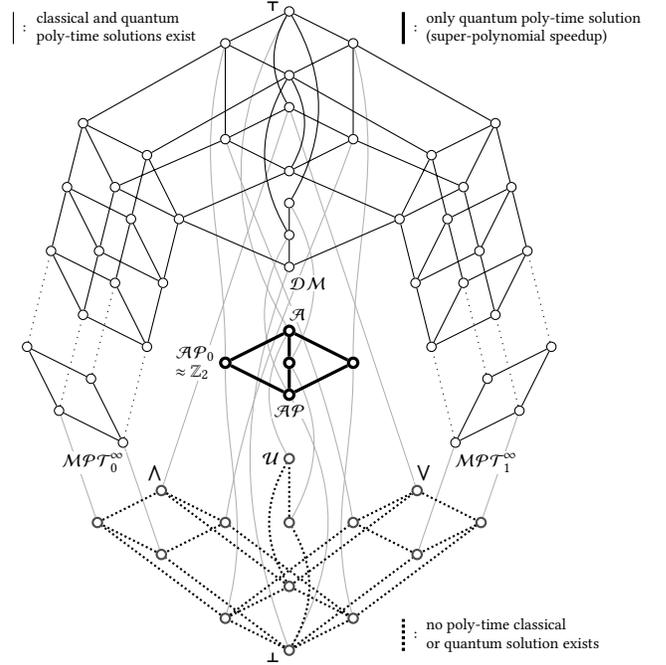
\begin{figure} \centering 
\begin{tikzpicture}[font=\tiny, yscale=0.85, xscale=0.85]
  \path (0, 0)    node[no_poly_node, clone] (F) {}
       +(-1, 0.5) node[no_poly_node, clone] (UP0) {}
       +(1, 0.5)  node[no_poly_node, clone] (UP1) {}
       ++(0,1)    node[no_poly_node, clone] (UM) {}
       ++(0,1)    node[no_poly_node, clone] (UD) {}
       ++(0,1)    node[no_poly_node, clone] (U) {}
       ++(0,1)    node[superpoly, clone] (AP) {}
        +(-1,0.5) node[superpoly, clone] (AP0) {}
        +(0,0.5)  node[superpoly, clone] (AD) {}
        +(1,0.5)  node[superpoly, clone] (AP1) {}
       ++(0,1)    node[superpoly, clone] (A) {}
       ++(0,1)    node[class_and_quant, clone] (DM) {}
       ++(0,0.5)  node[class_and_quant, clone] (DP) {}
       ++(0,0.5)  node[class_and_quant, clone] (D) {}
       ++(0,0.5)  node[class_and_quant, clone] (MP) {}
        +(-1,0.5) node[class_and_quant, clone] (MP0) {}
        +(1,0.5)  node[class_and_quant, clone] (MP1) {}
       ++(0,1)    node[class_and_quant, clone] (M) {}
       ++(0,0.5)  node[class_and_quant, clone] (P) {}
        +(-1,0.5) node[class_and_quant, clone] (P0) {}
        +(1,0.5)  node[class_and_quant, clone] (P1) {}
       ++(0,1)    node[class_and_quant, clone] (T) {}; 
  \path (UP0)
      ++(0,1.5)     node[no_poly_node, clone] (MeetP1) {}
      ++(-1,-0.5)   node[no_poly_node, clone] (MeetP) {}
       +(0,1)       node[no_poly_node, clone] (Meet) {}
      ++(-1,0.5)    node[no_poly_node, clone] (MeetP0) {}

      ++(0.4,1.25)  node[class_and_quant, clone] (MPT0inf) {}
       +(-0.5,1)    node[class_and_quant, clone] (PT0inf) {}
      ++(-1,0.5)    node[class_and_quant, clone] (MT0inf) {}
       +(-0.5,1)    node[class_and_quant, clone] (T0inf) {}
        (MPT0inf)
      ++(1.5/4,1.5) node[class_and_quant, clone] (MPT04) {}
       +(-0.5,1)    node[class_and_quant, clone] (PT04) {}
      ++(-1,0.5)    node[class_and_quant, clone] (MT04) {}
       +(-0.5,1)    node[class_and_quant, clone] (T04) {}
       (MPT04)
      ++(0.25,1)    node[class_and_quant, clone] (MPT03) {}
       +(-0.5,1)    node[class_and_quant, clone] (PT03) {}
      ++(-1,0.5)    node[class_and_quant, clone] (MT03) {}
       +(-0.5,1)    node[class_and_quant, clone] (T03) {}
       (MPT03)
      ++(0.25,1)    node[class_and_quant, clone] (MPT02) {}
       +(-0.5,1)    node[class_and_quant, clone] (PT02) {}
      ++(-1,0.5)    node[class_and_quant, clone] (MT02) {}
       +(-0.5,1)    node[class_and_quant, clone] (T02) {}; 
  \path (UP1)
      ++(0,1.5)      node[no_poly_node, clone] (JoinP0) {}
      ++(1,-0.5)     node[no_poly_node, clone] (JoinP) {}
       +(0,1)        node[no_poly_node, clone] (Join) {}
      ++(1,0.5)      node[no_poly_node, clone] (JoinP1) {}

      ++(-0.4,1.25)  node[class_and_quant, clone] (MPT1inf) {}
       +(0.5,1)      node[class_and_quant, clone] (PT1inf) {}
      ++(1,0.5)      node[class_and_quant, clone] (MT1inf) {}
       +(0.5,1)      node[class_and_quant, clone] (T1inf) {}
        (MPT1inf)
      ++(-1.5/4,1.5) node[class_and_quant, clone] (MPT14) {}
       +(0.5,1)      node[class_and_quant, clone] (PT14) {}
      ++(1,0.5)      node[class_and_quant, clone] (MT14) {}
       +(0.5,1)      node[class_and_quant, clone] (T14) {}
       (MPT14)
      ++(-0.25,1)    node[class_and_quant, clone] (MPT13) {}
       +(0.5,1)      node[class_and_quant, clone] (PT13) {}
      ++(1,0.5)      node[class_and_quant, clone] (MT13) {}
       +(0.5,1)      node[class_and_quant, clone] (T13) {}
       (MPT13)
      ++(-0.25,1)    node[class_and_quant, clone] (MPT12) {}
       +(0.5,1)      node[class_and_quant, clone] (PT12) {}
      ++(1,0.5)      node[class_and_quant, clone] (MT12) {}
       +(0.5,1)      node[class_and_quant, clone] (T12) {};  

   \draw[misc] (F) to[curveR] (AP)
     (F) to[out=90+20,in=-90-20] (DM)
     (UP0) to[out=90-15,in=-90] (AP0)
     (UP1) to[out=90+15,in=-90] (AP1)
     (UD) to[curveR] (AD)
     (U) to[curveL] (A)
     (AP) to[curveL] (DP)
     (AP0) to[out=90,in=-90-15] (P0)
     (AP1) to[out=90,in=-90+15] (P1)
     (AD) to[curveR] (D)
     (A) to[curveL] (T);  
   \draw[misc] (MeetP1) to[out=90-10,in=-90-20] (MP1)
     (MeetP) -- (MPT0inf)
     (MeetP0) -- (MT0inf)
     (Meet) -- (M);  
  \draw[misc] (JoinP0) to[out=90+10,in=-90+20] (MP0)
    (JoinP) -- (MPT1inf)
    (JoinP1) -- (MT1inf)
    (Join) -- (M);  
  \draw[no_poly_line] (F) -- (UP0) -- (UM) -- (UP1) -- (F)
    (F) -- (MeetP)
    (F) -- (JoinP)
    (F) to[curveR] (UD)
    (UP0) -- (MeetP0)
    (UP0) -- (JoinP0)
    (UP1) -- (JoinP1)
    (UP1) -- (MeetP1)
    (UM) to[curveL] (U)
    (UM) -- (Meet)
    (UM) -- (Join)
    (UD) -- (U);
  \draw[superpoly] (AP) -- (AP0) -- (A) -- (AP1) -- (AP) -- (AD) -- (A);
  \draw[class_and_quant] (DM) -- (MPT02) -- (MP) -- (MPT12) -- (DM) -- (DP) -- (D)
    (DP) to[curveL] (P)
    (D) to[curveR] (T)
    (MP) -- (MP0) -- (M) -- (MP1) -- (MP)
    (MP) to[curveR] (P)
    (MP0) -- (P0)
    (MP1) -- (P1)
    (M) to[curveL] (T)
    (P) -- (P0) -- (T) -- (P1) -- (P); 
  \draw[no_poly_line] (MeetP) -- (MeetP0) -- (Meet) -- (MeetP1) -- (MeetP);
  \draw[class_and_quant] (MPT0inf) -- (MT0inf) -- (T0inf) -- (PT0inf) -- (MPT0inf)
    (MPT0inf) edge[dotted] (MPT04)
    (MT0inf) edge[dotted] (MT04)
    (PT0inf) edge[dotted] (PT04)
    (T0inf) edge[dotted] (T04)
    (MPT04) -- (MT04) -- (T04) -- (PT04) -- (MPT04)
    (MPT04) -- (MPT03) (MT04) -- (MT03) (PT04) -- (PT03) (T04) -- (T03)
    (MPT03) -- (MT03) -- (T03) -- (PT03) -- (MPT03)
    (MPT03) -- (MPT02) (MT03) -- (MT02) (PT03) -- (PT02) (T03) -- (T02)
    (MPT02) -- (MT02) -- (T02) -- (PT02) -- (MPT02)
    (MT02) -- (MP0)
    (T02) -- (P0)
    (PT02) -- (P); 
  \draw[no_poly_line] (JoinP) -- (JoinP0) -- (Join) -- (JoinP1) -- (JoinP);
  \draw[class_and_quant] (MPT1inf) -- (MT1inf) -- (T1inf) -- (PT1inf) -- (MPT1inf)
    (MPT1inf) edge[dotted] (MPT14)
    (MT1inf) edge[dotted] (MT14)
    (PT1inf) edge[dotted] (PT14)
    (T1inf) edge[dotted] (T14)
    (MPT14) -- (MT14) -- (T14) -- (PT14) -- (MPT14)
    (MPT14) -- (MPT13) (MT14) -- (MT13) (PT14) -- (PT13) (T14) -- (T13)
    (MPT13) -- (MT13) -- (T13) -- (PT13) -- (MPT13)
    (MPT13) -- (MPT12) (MT13) -- (MT12) (PT13) -- (PT12) (T13) -- (T12)
    (MPT12) -- (MT12) -- (T12) -- (PT12) -- (MPT12)
    (MT12) -- (MP1)
    (T12) -- (P1)
    (PT12) -- (P); 

  \node[cloneLabel, above, xshift=-0.25em] at (Meet) {$\CloneWedge$};
  \node[cloneLabel, above, xshift=0.25em] at (Join) {$\CloneVee$};
  \node[cloneLabel, below left, xshift=0.43em] at (MPT0inf) {$\Clone{MPT}_0^\infty$};
  \node[cloneLabel, below right, xshift=-0.43em] at (MPT1inf) {$\Clone{MPT}_1^\infty$};
  \node[cloneLabel, left, yshift=0.2em] at (T) {$\CloneTop$};

  \node[cloneLabel, left, yshift=-0.25em] at (F) {$\CloneBot$};
  \node[cloneLabel, left] at (U) {$\Clone{U}$};
  \node[cloneLabel, above, xshift=0.3em] at (A) {$\Clone{A}$};
  \node[cloneLabel, left, align=left] at (AP0) {$\Clone{AP}_0$ \\ $\approx \ZZ_2$};
  \node[cloneLabel, below] at (AP) {$\Clone{AP}$};
  \node[cloneLabel, below right, xshift=-0.4em] at (DM) {$\Clone{DM}$};

  \coordinate (key_center) at ($(current bounding box.north east)!0.5!(current bounding box.south west)$);
  \coordinate (class_and_quant_key) at ($(key_center)+(-4.1,4.75)$);
  \draw[class_and_quant] ($(class_and_quant_key.west)+(-0.215,-0.25)$) -- node[black, right]{:} ($(class_and_quant_key.west)+(-0.215,0.25)$);
  \node[anchor=west, align=left] at (class_and_quant_key) {classical and quantum \\ poly-time solutions exist};
  \coordinate (superpoly_key) at ($(key_center)+(2,4.75)$);
  \draw[superpoly] ($(superpoly_key.west)+(-0.215,-0.25)$) -- node[black, right]{:} ($(superpoly_key.west)+(-0.215,0.25)$);
  \node[anchor=west, align=left] at (superpoly_key) {only quantum poly-time solution \\ (super-polynomial speedup)};
  \coordinate (no_poly_key) at ($(key_center)+(2,-4.75)$);
  \draw[no_poly_line, very thick] ($(no_poly_key.west)+(-0.215,-0.25)$) -- node[black, right]{:} ($(no_poly_key.west)+(-0.215,0.25)$);
  \node[anchor=west, align=left] at (no_poly_key) {no poly-time classical \\ or quantum solution exists};
\end{tikzpicture} 
  \caption{ Diagram of Post's Lattice. The $\HKP$ for powers of clones in bold
    have a efficient quantum solution, but no efficient classical solution
    (i.e.\ they exhibit super-polynomial speedup). Powers of clones in plain
    lines have polynomial-time classical and quantum solutions to their $\HKP$s,
    while powers of clones indicated with dotted lines have neither classical
    nor quantum polynomial-time solutions to their respective $\HKP$s. The clone
    of the group $\ZZ_2$ is indicated.}
  \Description{ Figures~\ref{fig:PL_positive} and \ref{fig:PL_negative}
    combined. }
  \label{fig:concl_lattice}
\end{figure}  

Combining Theorem~\ref{thm:AP_MPT_quantum}, Corollary~\ref{cor:classical_CD},
Theorem~\ref{thm:no_classical}, and Theorem~\ref{thm:no_QA} provides a complete
classification of the quantum and classical algorithmic complexity of
$\HKP(\alg{B}^n)$, where $\alg{B}$ is a 2-element algebra. This classification
is summarized in the theorem below and in Figure~\ref{fig:concl_lattice}.

\begin{thm} \label{thm:concl_classification} 
Let $\alg{B}$ be a 2-element algebra.
\begin{enumerate}
  \item If $\Clone{MPT}_0^{\infty} \preceq \alg{B}$, $\Clone{MPT}_1^{\infty}
    \preceq \alg{B}$, or $\Clone{DM} \preceq \alg{B}$, then there exist both
    classical and quantum polynomial-time algorithms for $\HKP(\alg{B}^n)$.

  \item If $\Clone{AP} \preceq \alg{B} \preceq \Clone{A}$, then a quantum
    polynomial-time algorithm solving $\HKP(\alg{B}^n)$ exists. Furthermore, no
    classical polynomial-time algorithm for $\HKP(\alg{B}^n)$ exists.

  \item If $\alg{B} \preceq \CloneWedge$, $\alg{B} \preceq \CloneVee$, or
  $\alg{B} \preceq \Clone{U}$ then no quantum or classical polynomial-time
  algorithm for $\HKP(\alg{B}^n)$ exists.
\end{enumerate}
\end{thm}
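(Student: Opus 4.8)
The plan is to assemble the statement from the four principal results of Sections~\ref{sec:hkp_positive} and~\ref{sec:hkp_negative} --- namely Theorem~\ref{thm:AP_MPT_quantum}, Corollary~\ref{cor:classical_CD}, Theorem~\ref{thm:no_classical}, and Theorem~\ref{thm:no_QA} --- organized according to the three cases in the statement. Two elementary bridging observations make the hypotheses line up. First, any classical polynomial-time algorithm is in particular a quantum polynomial-time algorithm (a quantum computer can simulate a classical one with only polynomial overhead), so every \emph{classical} upper bound immediately yields a \emph{quantum} upper bound. Second, and dually, a classical algorithm is a special case of a quantum one, so any \emph{quantum} lower bound forces the corresponding \emph{classical} lower bound: a classical polynomial-time algorithm would otherwise furnish a quantum one, contradicting the bound. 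These two facts are exactly what is needed to cover the gaps where the hypothesis lists of the source theorems do not coincide.

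For item~1, the classical polynomial-time algorithm is delivered verbatim by Corollary~\ref{cor:classical_CD}, whose three hypotheses ($\Clone{MPT}_0^\infty \preceq \alg{B}$, $\Clone{MPT}_1^\infty \preceq \alg{B}$, $\Clone{DM} \preceq \alg{B}$) are precisely those of item~1. For the quantum half, the two ``wing'' cases $\Clone{MPT}_0^\infty \preceq \alg{B}$ and $\Clone{MPT}_1^\infty \preceq \alg{B}$ are covered directly by Theorem~\ref{thm:AP_MPT_quantum}; the remaining case $\Clone{DM} \preceq \alg{B}$ is \emph{not} listed there, but the quantum upper bound follows from the classical one via the first bridging observation. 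Item~2 is then immediate: the quantum upper bound is Theorem~\ref{thm:AP_MPT_quantum} under the hypothesis $\Clone{AP} \preceq \alg{B}$, while the absence of a classical polynomial-time algorithm is Theorem~\ref{thm:no_classical} under the hypothesis $\alg{B} \preceq \Clone{A}$; both hold throughout the interval $\Clone{AP} \preceq \alg{B} \preceq \Clone{A}$.

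For item~3, each of the three hypotheses $\alg{B} \preceq \CloneWedge$, $\alg{B} \preceq \CloneVee$, and $\alg{B} \preceq \Clone{U}$ yields a quantum exponential lower bound by Theorem~\ref{thm:no_QA} (its three hypotheses are exactly these three), so no quantum polynomial-time algorithm exists. The classical hardness then follows in every case from the second bridging observation. One could alternatively invoke Theorem~\ref{thm:no_classical} directly for $\CloneWedge$ and $\CloneVee$; only the $\Clone{U}$ case genuinely requires the quantum-to-classical bridge, since $\Clone{U}$ does not appear among the hypotheses of Theorem~\ref{thm:no_classical}.

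I do not expect any real obstacle here: the proof is a bookkeeping exercise, and the only point requiring care is that the hypothesis lists of the four source results are not identical ($\Clone{DM}$ appears for classical-positive but not quantum-positive, and $\Clone{U}$ appears for quantum-hardness but not classical-hardness), so the two bridging observations must be invoked at exactly those two spots. I would close by remarking that the three cases are \emph{exhaustive} over all $2$-element algebras: by inspection of Post's lattice (Figure~\ref{fig:concl_lattice}), every boolean clone either sits above one of $\Clone{MPT}_0^\infty$, $\Clone{MPT}_1^\infty$, $\Clone{DM}$, or lies in the interval $[\Clone{AP}, \Clone{A}]$, or lies below one of $\CloneWedge$, $\CloneVee$, $\Clone{U}$. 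Together with Observation~\ref{obs:clones_and_hkp}, which lets us reason about clones rather than individual algebras, this is what upgrades the three conditionals into the advertised complete classification.
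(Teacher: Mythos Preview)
Your proposal is correct and takes essentially the same approach as the paper: the paper presents Theorem~\ref{thm:concl_classification} purely as a summary statement, prefaced only by the sentence ``Combining Theorem~\ref{thm:AP_MPT_quantum}, Corollary~\ref{cor:classical_CD}, Theorem~\ref{thm:no_classical}, and Theorem~\ref{thm:no_QA} provides a complete classification\ldots'', with no further argument. Your write-up is actually more careful than the paper's, since you make explicit the two bridging observations (classical upper bounds give quantum upper bounds for the $\Clone{DM}$ case; quantum lower bounds give classical lower bounds for the $\Clone{U}$ case) and the exhaustiveness of the three cases over Post's lattice, both of which the paper leaves implicit.
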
 

This classification can be seen as a broad extension of the results of
Simon~\cite{Simon_Alg}, which are included in item~2 of the theorem when
$\alg{B} = \ZZ_2$.

There are many open questions surrounding the Hidden Kernel Problem, a few of
which we detail below. Theorems~\ref{thm:concl_classification}
and~\ref{thm:no_classical} state that both classically and using a quantum
algorithm, $\HKP(\AA)$ is either in \comp{P} or \comp{EXP} when $\AA$ is the
power of a 2-element algebra.

\begin{question*}
For \emph{arbitrary} fixed finite algebras $\AA$, precisely which complexity
classes (quantum or classical) are parameterized by $\HKP(\AA)$?
\end{question*}

Turning this problem around, we can instead ask for algebraic properties which
enforce the existence of an efficient quantum solution to $\HKP(\AA^n)$.

\begin{question*}
Consider $\HKP(\AA^n)$, where $\AA$ is an arbitrary finite algebra. What
structural conditions on $\AA$ ensure that $\HKP(\AA^n)$ always admits a
polynomial-time quantum solution?
\end{question*}

In an algebra $\AA$, the term operation $t(x_1, \dots, x_n)$ is said to be a
\emph{cube term} if for every $i \in [n]$ there is a choice of $u_1, \dots,
u_n\in \{x,y\}$ with $u_i = y$ such that the equation $t(u_1, \dots, u_n) = x$
holds in $\AA$. The clone $\Clone{AP}$ has a cube term given by $x + y + z$ and
$\Clone{DM}$ has a cube term given by $\text{maj}(x,y,z)$. More generally, every
group has a cube term given by $xy^{-1}z$.

The study of cube terms originated with algebraic approach to the
\emph{Constraint Satisfaction Problem (CSP)} in~\cite{BIMMVW_FewSubalgs}, and
the existence of a cube term for $\AA$ is associated with some quite strong
regularity conditions on the structure of powers of $\AA$.

If $\AA$ has a cube term, then subalgebras of $\AA^n$ have a generating set
which is bounded by a polynomial in $n$. This applies in particular to
congruences of $\AA^n$. It follows that if $\AA$ has a cube term, then counting
arguments similar to those in the proofs of Lemmas~\ref{lem:vee_wedge_no_P_alg}
and~\ref{lem:U_no_quantum} will not be sufficient to rule out the existence of a
quantum algorithm. This leads us to conjecture the following.

\begin{conj*}
If $\AA$ has a cube term, then $\HKP(\AA^n)$ has an efficient quantum solution.
\end{conj*}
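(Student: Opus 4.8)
The plan is to reprove Theorem~\ref{thm:simon} in greater generality by separating it into a \emph{quantum sampling} phase and a \emph{classical reconstruction} phase, and then to replace the group-theoretic tools special to $\ZZ_2^n$ with the structure theory available for algebras with a cube term. The first point to record is that the hypothesis makes the answer polynomially describable: if $\AA$ has a cube term then so does $\AA^2$ (cube terms are defined by identities and so are preserved under products), and a congruence of $\AA^n$ is exactly a subalgebra of $(\AA^n)^2 \cong (\AA^2)^n$. By the few-subpowers characterization of cube terms in~\cite{BIMMVW_FewSubalgs}, there is a fixed polynomial $p$ so that every such subalgebra --- in particular $\ker(\phi)$ --- is generated by at most $p(n)$ elements. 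Hence it suffices to produce such a generating set quickly.

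\noindent First I would establish the classical reconstruction step, an analogue of Lemma~\ref{lem:simon_generates}: that $\cl{O}(p(n) + \lg(\tau))$ samples drawn from an appropriate distribution supported on $\ker(\phi)$ generate the whole congruence with probability $1 - 1/\tau$. As in Lemma~\ref{lem:simon_generates}, one would bound the failure probability by a union bound over the maximal subalgebras of $\ker(\phi)$, estimating for each the chance that every sample falls inside it. Here the argument is less immediate than in the group case, where every maximal subobject has index $2$: a maximal subalgebra of a general algebra may have index close to $1$, so one must use the few-subpowers bound to control simultaneously the number of maximal subalgebras and the rate at which fresh random elements escape them. I do not expect this to be the crux, but it is the first place the proof departs from Simon's.

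\noindent The main obstacle is the quantum sampling phase. In Theorem~\ref{thm:simon} the gates $H^{\otimes n}$ are precisely the Fourier transform of the group $\ZZ_2^n$; conjugating $\widehat{\phi}$ by them concentrates the measurement statistics on the orthogonal complement $\ker(\phi)^\perp$, and linear duality then recovers $\ker(\phi)$. A general algebra with a cube term carries no group operation, hence no Fourier transform and no orthogonal complement, so this mechanism has no direct substitute. One can still prepare, from the oracle, the uniform superposition over a random class of $\ker(\phi)$ (a ``fiber state''), exactly as the standard-method coset state is prepared for the HSP; the difficulty is that without a Fourier transform there is no evident way to extract information about $\ker(\phi)$ from such states. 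That any successful procedure must be genuinely quantum is guaranteed by the classification: the interval $[\Clone{AP}, \Clone{A}]$ consists of cube-term algebras (the Maltsev operation $x+y+z$ is a cube term), yet Theorem~\ref{thm:no_classical} forbids a classical algorithm there.

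\noindent To circumvent this I would exploit the fact that a finite algebra with a cube term generates a congruence-modular variety, so that commutator theory governs $\Con(\AA^n)$. The strategy is to recover $\ker(\phi)$ by working through a chain of prime quotients of the congruence lattice: each \emph{abelian} prime quotient carries module structure, on which the abelian Hidden Subgroup algorithm of Kitaev~\cite{Kitaev_AbStab} can be invoked as a subroutine (this is the step that supplies the quantum speedup and subsumes Simon's argument), while each non-abelian quotient can be handled combinatorially by adapting the coordinate-testing idea of Theorem~\ref{thm:classical_CD_alg}. The two technical hurdles, and where I expect the real work to lie, are (i) synthesizing, from the single global oracle $\phi$, the oracle needed to run the abelian subroutine on each local module, and (ii) bounding both the length of the chain and the per-step cost, via the few-subpowers polynomial, so that the overall query complexity stays polynomial in $n$. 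Proving that the abelian layers are quantumly accessible through $\phi$ is the heart of the conjecture.
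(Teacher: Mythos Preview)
The paper does not prove this statement: it is stated explicitly as an open \emph{conjecture} in Section~\ref{sec:concl}, motivated by the observation that cube terms guarantee polynomially bounded generating sets for congruences (so that counting arguments like those in Lemmas~\ref{lem:vee_wedge_no_P_alg} and~\ref{lem:U_no_quantum} cannot rule out an efficient quantum algorithm). There is therefore no proof in the paper to compare your proposal against.

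Your write-up is appropriately framed as a \emph{strategy} rather than a proof, and the broad outline is sensible: the few-subpowers bound does make the output polynomially describable, cube terms do imply congruence modularity so that commutator theory is available, and the split into abelian layers (where one hopes to invoke the abelian HSP) and non-abelian layers (where one hopes to adapt the projection-testing of Theorem~\ref{thm:classical_CD_alg}) is a natural line of attack. But you have correctly identified, and not resolved, the two genuine obstructions. First, the reconstruction analogue of Lemma~\ref{lem:simon_generates} is not merely ``less immediate'': without a group operation there is no reason a maximal subalgebra of $\ker(\phi)$ must have bounded index, so a union bound over maximal subalgebras may simply fail to give a useful probability estimate, and few-subpowers bounds the \emph{number} of subalgebras, not their indices. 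Second, and more seriously, the step you call ``the heart of the conjecture'' --- manufacturing, from the single black-box $\phi$, the oracles needed to run Kitaev's algorithm on each abelian section of a congruence chain --- is not addressed at all beyond naming it. Until that step is made precise, what you have is a plausible research programme, not a proof; this is consistent with the paper's own assessment that the question is open.
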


As mentioned above, every group has a cube term. An efficient quantum solution
to the above conjecture would therefore restrict to an efficient quantum
solution to the hidden normal subgroup problem.

\balance

\bibliographystyle{ACM-Reference-Format}   
\bibliography{hkp-post,references}
\end{document}